\PassOptionsToPackage{square,comma,numbers,sort&compress,super}{natbib}

\documentclass[12pt]{article}

\usepackage{pdflscape}

\usepackage{geometry}

\usepackage{type1cm}

\usepackage{tikz}

\usepackage{etoolbox}
\makeatletter
\patchcmd{\@sect}{#8}{\boldmath #8}{}{}
\let\ori@chapter\@chapter
\def\@chapter[#1]#2{\ori@chapter[\boldmath#1]{\boldmath#2}}
\makeatother

\usepackage{amsmath}
\usepackage{graphicx,psfrag,epsf}
\usepackage{enumerate}
\usepackage{caption}

\usepackage{mathtools}

\usepackage{bm}

\usepackage{natbib}

\setcitestyle{square, comma, numbers,sort&compress, super}

\usepackage{url}

\usepackage[labelfont=bf]{caption}

\usepackage{threeparttable}

\usepackage{array}

\usepackage{amsthm}

\newtheorem*{newthm}{Theorem}

\newtheorem*{lemma*}{Lemma}

\usepackage{diagbox}

\usepackage[T1]{fontenc}

\usepackage[toc,page]{appendix}

\usepackage{setspace}

\usepackage{epstopdf}
\usepackage[caption=false]{subfig}

\usepackage[numbers,sort&compress]{natbib}
\bibpunct[, ]{[}{]}{,}{n}{,}{,}

\newtheorem*{proposition*}{Proposition}

\theoremstyle{definition}

\usepackage[english]{babel}
\usepackage{amssymb}
\usepackage{textcmds}
\usepackage{hyperref}

\usepackage[english]{cleveref}

\usepackage{xcolor}

\usepackage{graphicx}
\usepackage[section]{placeins}

\usepackage{float}

\usepackage{url}

\usepackage{breakurl}

\usepackage[caption = false]{subfig}
\usepackage{authblk}

\newcommand{\blind}{0}

\usepackage{numprint}
\npdecimalsign{.}
\nprounddigits{4}

\begin{document}

\if0\blind
{

\title{\bf Deriving the Variance-Minimizing Design for Standard Addition via c-Optimality.}

\author[1]{Gerhard G\"ossler\thanks{
    Corresponding author: gerhard.goessler@uni-graz.at}}
    \author[2]{Vera Hofer}
\author[1]{Walter Goessler}

\affil[1]{Institute of Chemistry, Analytical Chemistry, University of Graz, Austria}
\affil[2]{Institute of Operations and Information Systems, University of Graz, Graz, Austria} 

\date{}

  \maketitle
} \fi

\if1\blind
{
  \bigskip
  \bigskip
  \bigskip
  \begin{center}
    {\LARGE\bf Title}
\end{center}
  \medskip
} \fi

\vspace*{-1\baselineskip}

\bigskip
\begin{abstract}

\noindent Knowledge about optimal designs for standard addition seems to be scattered among literature and is also, at least partially, only available in mathematical literature that is not quickly accessible for readers not skilled in the field of design optimality theory. Therefore, the idea for this work was to summarize what is already available in analytical literature and to apply the respective results from optimality theory, where needed, to the special case of standard addition. It is shown, for measurement errors that are non-decreasing, e.g., are constant or increase  linearly or quadratically with increasing analyte concentration, that the optimal design in the case of a linear response is a two-point design irrespective of the particular behavior of measurement error variance. In addition, it is demonstrated that the optimal allocation of measurements depends on the concrete setting, which means that the optimal distribution of measurements may deviate significantly from a 50:50 ratio. It is also investigated how the range, i.e., the largest added concentration influences the result. Last but not least, also the question of applying weighted regression is discussed and it is shown, that, in contrast to designs using more than two spiked concentrations, no weighting is necessary to achieve optimal results, when a two-point design is used. While the focus lies on the precision of the concentration estimate also the implications for the bias are investigated.

\end{abstract}

\noindent%
{\it Keywords:} standard addition, extrapolation approach, trueness and precision, bias and variability, homo- and heteroscedasticity, approximation formulas, constrained optimization, Karush-Kuhn-Tucker conditions, Elfving-theorem, c-optimality, two-point design

\vfill

\newpage

\section{Introduction}


This work aims to provide a comprehensive analysis of design optimality for standard addition. Since the mathematical treatment of this topic presented in the Methods section is rather extensive, readers with a primarily applied focus may wish to proceed as follows: read the Introduction to become familiar with the core concepts and notation used in this work, and then proceed directly to the Discussion section for a concise, practically oriented summary.\\

This work investigates experimental designs, referred to as designs in the following. For extensive information on this topic, see, for example, the classic textbook by Montgomery \cite{montgomery.2017}. The design in the case of standard addition is given by the number of independent measurements $n$ which are taken at $2 \leq n_e \leq n$ different spiked concentrations $x_i \in [0,r], i=1,...,n_e$, with $r > 0$ denoting the chosen upper concentration limit for spiking. At each of the concentrations $x_i$ a certain number $1 \leq n_i < n$ of independent repeated measurements is taken. Since the $n_i$ do not have to be the same for all $i=1,...,n_e$ they can take on every number $\geq 1$ as long as $n=\sum n_i$ holds. Therefore, the design is represented by the vector $\bm{x} \in \mathbb{R}^n$ as follows: $\bm{x}=(x_{11},...,x_{1n_1},x_{21},...,x_{2n_2},...,x_{n_e1},...,x_{n_en_{n_e}})$ with $x_i=x_{i1}=x_{i2}=...=x_{in_i}, i=1,...,n_e$. Thus, from a geometric point of view, $\bm{x}$ contains the coordinates of a certain point in the $n$-dimensional hypercube $[0,r]^n$.
 Furthermore, also the chosen value of $r$ is of significant importance for the performance of standard addition and should be based on a well-founded decision.\\ 

Generally speaking, the performance of an estimator depends on the relationship between the independent and dependent variables, the measurement error, and the functional form of the estimator itself, which determines how these inputs are processed and how errors propagate from measurement to estimate. For standard addition, the assumed relationship between the instrumental response and the analyte concentration is linear and superimposed by normally distributed measurement errors. A comprehensive discussion of the homoscedastic case is provided in \cite{Gossler.2025}. The homoscedastic case can be easily extended to the heteroscedastic case, which is very common for many different instrumental techniques (see e.g. \cite{Franke.1978} and \cite{Ellison.2008}, but see also \cite{Goncalves.2016} for a case in which the data appear to be homoscedastic), by assuming that the measurement error varies as a function of the analyte concentration. Mathematically this can be expressed as follows: Let $Y$ denote the response, $C_0$ the unknown analyte concentration in the sample, $x$ the added concentration of the analyte, $c=C_0 + x$ the analyte concentration in the spiked sample, $\beta_0$ and $\beta_1$ intercept and slope of the assumed linear relationship, and $\varepsilon(c)$ the term representing the concentration dependent measurement error. Then, since it is assumed that the blank value of the applied analytical method is not significantly different from zero, the relationship between concentration and response is given by

\vspace*{-1\baselineskip}

\begin{equation} \label{eq1}
Y(c) = \beta_1c + \varepsilon(c) = Y(C_0 + x) = \beta_1C_0 + \beta_1x + \varepsilon(c) = \beta_0 + \beta_1x + \varepsilon(c) 
\end{equation}

\vspace*{.25\baselineskip}

with $\beta_0 = \beta_1C_0$, $ \varepsilon(c) \sim N(0,\sigma(c))$ and $\sigma(\cdot)$ indicating the functional relationship between analyte concentration and error variability. Based on this, $C_0$ equals $\beta_0/\beta_1$ and therefore, the proper estimator for $C_0$ is given by

\vspace*{-1\baselineskip}

\begin{equation} \label{eq2}
\hat{C}_0=\hat{\beta}_0/\hat{\beta}_1
\end{equation}

\vspace*{.25\baselineskip}

with $\hat{\beta}_0$ and $\hat{\beta}_1$ being estimators for the unknown intercept and slope of the assumed linear relationship gained by applying linear regression.\\

The performance of linear regression depends on the chosen design $\bm{x}$. The concept of D-optimality \cite{montgomery.2017} provides a criterion for selecting experimental designs that yield efficient estimation of model parameters which, among other applications, has been widely adopted in analytical chemistry - see  \cite{valverde2023} and \cite{wang2025} for some recent examples. Since $\hat{\beta}_0$ as well as $\hat{\beta}_1$ are random quantities due to measurement error also $\hat{C}_0$ is a random quantity. Therefore, the performance of $\hat{C}_0$ is best measured in terms of precision and bias. Hence, it is reasonable to choose the design such that, for a given predetermined number $n$ of independent measurements, the resulting estimator attains the highest possible precision and the lowest possible bias. Throughout this work, the precision of  $\hat{C}_0$ or, equivalently, its variance, serves as the optimality criterion. The objective is to maximize the precision of $\hat{C}_0$, which is equivalent to minimizing its variance. Nevertheless, since the bias should of course also be as small as possible it is also investigated how the optimal design with respect to precision influences this quantity.\\

It will be shown, that the optimal design with respect to the precision of $\hat C_0$, in the following also simply denoted as the \hyperlink{start}{\hypertarget{vod}{\textit{variance optimal design}}}, is, under all circumstances considered, a two-point design, i.e., a design with $x_1=0$ and only one added concentration $x_2$ ($n_e=2$). It will be further demonstrated how to determine the optimal added concentration and how to optimally allocate the chosen number of measurements $n$ to  $x_1$ and $x_2$. As will become clear in the following sections, the optimal number of measurements for $x_1$, denoted $n_1$, does not necessarily coincide with the optimal number of measurements for $x_2$, denoted $n_2$. Furthermore, depending on the behavior of the variance, $x_2$ does not always correspond to the upper limit of the permissible concentration range; it may also be below it. To ensure a clear distinction, in the following all designs that use more than one added concentration are consistently referred to as multi-point designs.\\ 

When considering heteroscedasticity, another significant advantage of two-point designs emerges, making their application even more compelling: unlike multi-point designs, weighted regression \cite{Aitken.1935} is unnecessary for two-point designs, which avoids the (potentially problematic) determination of weights. On the other hand, weighting is strictly recommended when using multi-point designs. Even in the case of data indicating homoscedastic errors it could still make sense to assume heteroscedasticity if such behavior is known for the applied method. Simulation results show  that, in cases where homoscedasticity appears plausible based on the data, ignoring a true heteroscedastic behavior of the applied method could  worsen the result significantly in cases where a multi-point design is applied.\\

Determining $\bm{x} \in [0,r]^n$ which minimizes $\sigma^2_{\hat{C}_0}$ subject to $0 \leq x_i \leq r$ for all $i=1,...,n$ is a nontrivial optimization task. One way to address this problem is to apply the theory of Karush, Kuhn, and Tucker \cite{karush.1939}, which generalizes the method of Lagrange multipliers \cite{nocedal.2006} to optimization problems with inequality constraints refered to as the KKT conditions. However, this approach depends on the explicit functional form of the objective function and, therefore, the entire calculation must be repeated whenever the variance structure, and thus the objective function, changes.  Alternatively, the problem can be embedded into the framework of optimal experimental design as developed, for example, in the monograph of Pukelsheim \cite{pukelsheim.2006}. Using results such as Elfving's theorem \cite{elfving.1952} and  Carathéodory's theorem \cite{tyrrell.1970}, it is possible to derive structural properties of optimal designs that hold independently of the particular variance of the observations. In particular, these results imply that an optimal design is supported on only a small number of uniquely determined points. A problem with determining the optimal design is that this requires knowledge of the underlying parameters, which is of course not the case in practice. Fortunately, however, it will become apparent that two-point designs using a 50:50 allocation of the measurements are, at least if the variance increases at most quadratically, always significantly better than common multi-point designs, e.g., designs with $n_e \in \{3,4,5\}$,  $x_i$ evenly distributed over $[0,r]$ and $n_1=...=n_{n_e}$ for all $i=1,...,n_e$.\\

The question of optimal designs for standard addition seems to be scarcely addressed in the analytical literature. A very recommendable and accessible overview is given by Stephen et al. \cite{Ellison.2008}. Two examples of the literature devoted to the mathematical treatment of the question of optimal designs for standard addition are the publications by Franke et al. \cite{Franke.1978} and Ratzlaff \cite{ratzlaff.1979}. There are numerous publications discussing optimality of experimental designs in the mathematical literature on optimality theory. Optimal experimental design (e.g., the work of Elfving  \cite{elfving.1952} or Pukelsheim \cite{pukelsheim.2006}) is mathematically grounded in \textit{convex analysis} and \textit{convex geometry}. Only a rather small number of publications apply this mathematical optimality theory directly to concrete analytical problems. One example is the work of Kitsos and Kolovos, see e.g., \cite{kitsos.2010}, who use optimal design theory to study the calibration of a pH meter. However, to our knowledge, these approaches have not yet been explicitly applied to the problem of standard addition. Moreover, many contributions based on optimal design theory are mathematically involved to a degree that makes them rather difficult to access.\\ 

Therefore, the aim of this work is to extract all necessary theoretical results from the available literature, as summarized in the Methods  section, in order to explicitly solve the problem of optimal designs for standard addition based on optimality theory. Furthermore, the solution to the problem is presented in a mathematically rigorous manner, while at the same time allowing readers who are primarily interested in the applications rather than the underlying theory to easily find the information relevant to their needs. To this end, the Results and Discussion sections present and analyze, in a transparent and comprehensive manner, the outcomes of calculations and simulations based on Elfving’s theorem and its geometric characterization of the optimal design problem. The results presented comprise the determination of optimal designs and also the comparison of the performance of optimal, 50:50 two-point and common multi-point designs with respect to bias and precision for several distinct analytical settings.  These settings vary with respect to variance behavior and range. This allows the effects on the optimal design, in terms of the optimal value for $x_2$ and the corresponding allocation of measurements to $x_1$  and $x_2$, to be examined and enables a comparison of the performance of the aforementioned designs. Finally, a heuristic approach is proposed for practical implementation, aiming to achieve near-optimal performance in the absence of exact knowledge of the optimal design.

\section{Methods}

\subsection{Applying optimality theory: Preliminaries}

\paragraph{Problem formulation}\mbox{}\\

Consider the linear regression model presented in \ref{eq1}, i.e., $Y(x)=\beta_0+\beta_1 x+\varepsilon(x), x \in \mathcal{X} \ \text{and} \  \varepsilon(x) \sim N(0,\sigma(x))$ where the \textit{design variable} $x$ represents the added concentration restricted to the compact \textit{design region} $\mathcal{X}=[0,r]$ and $ \sigma(x)$ denotes the standard deviation of the measurement error which might be constant or increases with increasing $x$. Note that in the heteroscedastic case, the error terms are typically still assumed to have mean zero and to be uncorrelated, but their variances are not constant. The ordinary least squares (OLS) estimators for estimating $\beta_0$ and $\beta_1$ remain unbiased and consistent in this case, but are no longer efficient, and the usual variance formulas and inference procedures are invalid. If the variance structure is known (up to a multiplicative constant), weighted least squares (WLS) \cite{Montgomery.2013} provides efficient estimators. In particular, WLS yields the best linear unbiased estimators (BLUE \cite{rao.1973}) and leads to valid inference when the weights are correctly specified. Compared to OLS, it avoids misleading hypothesis tests and confidence intervals and it reduces the variance of the estimators within the class of linear unbiased estimators.\\

The weights  $w_i, \ i=1,...,n_e$, needed for WLS, should be chosen to be inversely proportional to the variances $\sigma_i^2=\sigma^2(x_i)$ of the responses $Y_i=Y(x_i)$, i.e., $w_i \propto 1/\sigma_i^2$. Multiplying the regression equation with the square root of the chosen weights, $\sqrt{w_i}$, yields the following transformed regression equation:

\vspace*{-.5\baselineskip}

\[
\sqrt{w_i}Y_i = \sqrt{w_i}\beta_0 + \sqrt{w_i}\beta_1x_i + \sqrt{w_i}\varepsilon_i  = \sqrt{w_i}\beta_0 + \sqrt{w_i}\beta_1x_i + \varepsilon^w_i= Y^w_i.
\]

\vspace*{.25\baselineskip}

By choosing a proportionality factor $\sigma^2$ \footnote{not to be confused with the function $\sigma(\cdot)$}, the variances can, for all $i=1,...,n_e$,  be factorized as $\sigma_i^2=\sigma^2v_i \ \Rightarrow \ v_i = \sigma^2_i/\sigma^2$ which defines a variance function $v(x) \geq 0$ by $v(x_i):=v_i$. Choosing $w_i=1/v_i$ results in 

\vspace*{-1\baselineskip}

\begin{equation} \label{propf}
var(Y^w_i )=var(\varepsilon^w_i)=var(\sqrt{w_i}\varepsilon_i)=w_i\sigma_i^2=\sigma_i^2/v_i=\sigma^2 
\end{equation}

\vspace*{.25\baselineskip}

\noindent for all $i=1,...,n_e$. It is thus not necessary to know the actual variances, rather it is sufficient to know the variances up to a proportionality factor, i.e., applying these weights leads to transformed observations $Y^w_i$, which obey the assumption of homoscedasticity. Therefore, the transformed regression equations fulfill the requirements of simple linear regression and can be solved by simply applying the approach of ordinary least squares which yields the parameter estimates $\hat{\beta}_0^w$ and $\hat{\beta}_1^w$ by minimizing the weighted sum of squared residuals, i.e., 

\vspace*{-.5\baselineskip}

\[
(\hat{\beta}_0^w , \hat{\beta}_1^w)=\underset{(\beta_0, \beta_1) \in \mathbb{R}^2}{\text{arg min}}\sum_{i=1}^{n} w_i(y_i - (\beta_0 + \beta_1x_i))^2
\]

\vspace*{.25\baselineskip}

\noindent with $y_i$ denoting an actual observed value of the response $Y_i$. Consequently, the estimator of $C_0$ based on weighted regression is given by 

\vspace*{-1.25\baselineskip}

\[
\hat{C}_0^{w}=\hat{\beta}^w_0/\hat{\beta}^w_1.
\]

\vspace*{.25\baselineskip}

Given the above discussion (see also \cite{Ellison.2008}), it is, for all that follows, therefore assumed that $\sigma(x)=\sigma\sqrt{v(x)}$,  with $\sigma > 0$ and variance function $v(x)$ (with repsect to the variance of the response $Y(x)$) given as

\vspace*{-1.25\baselineskip}

\begin{equation}\label{varfunc}
v(x)=V_0+(\beta_1(C_0+x))^k=V_0+y^k
\end{equation}

\vspace*{.25\baselineskip}

with $0 \leq V_0$, $0 \leq k$, $C_0$ denoting the unknown concentration and $y$ denoting the mean response at $c=C_0+x$. Then, the variance of $Y(x)$ is given by $\sigma^2(x)=\sigma^2 v(x)$ with $\sigma > 0$. Let $\sigma_0$ denote the standard deviation for the blank, then the parameter $V_0$  has to be chosen such that $(\sigma^2V_0)^{0.5}=\sigma_0$. $k=0$ represents the homoscedastic case (e.g., $\sigma^2(x)=\sigma^2$ when setting $V_0=0$)  and if $k>0$ applies, the variance of the response  increases with increasing $x$, e.g., $k=2$ and $V_0=0$ represent the case of constant RSD with RSD$=\sigma$. $V_0 > 0$ is of interest if $k>0$ and it is reasonably assumed that the measurement variability for the blank is not equal to zero.  If $x$ is large enough, $V_0$ can be neglected and the relative standard deviation (RSD) at a certain concentration $c=C_0+x$ becomes approximately:

\vspace*{-1\baselineskip}

\[
RSD(c)=\frac{\sigma(x)}{Y(x)} \approx \frac{\sigma \sqrt{(\beta_1(C_0 + x))^k}}{\beta_1(C_0 + x)}=\sigma (\beta_1(C_0 + x))^{k/2-1}
\]
which yields a constant RSD for $k=2$ since $\sigma (\beta_1(C_0 + x))^{2/2-1}=\sigma (\beta_1(C_0 + x))^{0}=\sigma$.\\

To facilitate further discussion, the regression model is rewritten in vector form as
\[
Y(x)=g(x)^T\bm{\beta}+\varepsilon(x),
\]
where the parameter vector $\bm{\beta}$ and the \textit{regression function} $g(x)$ are defined as follows
\[
\bm{\beta}=
\begin{pmatrix}
\beta_0\\
\beta_1
\end{pmatrix},
\qquad
g(x)=
\begin{pmatrix}
1\\
x
\end{pmatrix}.
\]

Multiplying both sides of the regression equation by $1/\sqrt{v(x)}$ results in a transformed regression equation with homoscedastic errors, i.e.,
\[
\frac{Y(x)}{\sqrt{v(x)}}=\frac{g(x)^T}{\sqrt{v(x)}}\bm{\beta}+\frac{\varepsilon(x)}{\sqrt{v(x)}}=\tilde{g}(x)^T\bm{\beta}+\tilde{\varepsilon}(x)=\tilde{Y}(x)
\]

\[ \text{with} \ \tilde{\varepsilon}(x)\sim N(0,\sigma) \ \text{and} \ 
\tilde g(x)=\frac{1}{\sqrt{v(x)}}
\begin{pmatrix}
1\\
x
\end{pmatrix}
\]

denoting the homoscedastic error term and the transformed regression function of the transformed equation. For the sake of simplicity, the vector valued function $f:\mathcal X \to \mathbb R^2$ defined as 
\[
f(x)=(f_1(x),f_2(x))^T := 
\begin{cases} 
g(x) & \text{if }k= 0 \\
\tilde g(x)   & \text{if } k > 0 
\end{cases}
\]
will be considered in the following. Consequently
\[
Y(x)=f(x)^T\bm{\beta}+ \varepsilon \quad \text{and} \quad \tilde Y(x)=f(x)^T\bm{\beta}+\tilde \varepsilon
\] 
represent the homoscedastic and the transformed heteroscedastic case, respectively.\\

Note, that in the present context of optimal design theory, the term \emph{regression function} refers to the vector-valued mapping $f(\cdot)$ while \emph{regression vector} denotes its value $f(x)$ at a given design point $x \in \mathcal X$. This terminology is specific to the present setting (see, e.g., \cite{pukelsheim.2006}) and should be distinguished from other uses of these terms in the regression literature.


\paragraph{\boldmath Variance and Bias of $\hat C_0$}\mbox{}\\

Since the variance and the bias of the ratio $\hat{C}_0=\hat{\beta}_0/\hat{\beta}_1$ are the quantities of interest, it is necessary to use approximations (see e.g., \cite{Gossler.2025}) of these quantities in order to obtain analytically tractable expressions rather than relying on simulation. Applying the method of propagation of error yields, based on the 1st and 2nd Taylor polynomial of $\hat{C}_0$ (see Supplement for details) the following approximation formulas:

\vspace*{-1\baselineskip}

\begin{equation} \label{varf}
var(\hat{C}_0) \approx \left(\frac{1}{\beta_1}\right)^2\sigma_{{\hat{\beta}}_0}^2+\left(\frac{\beta_0}{\beta_1^2}\right)^2\sigma_{{\hat{\beta}}_1}^2-\frac{2cov\left({\hat{\beta}}_0,{\hat{\beta}}_1\right)\beta_0}{\beta_1^3}
\end{equation}

\begin{equation} \label{biasf}
\begin{aligned}
Bias(\hat{C}_0)=\mathbb{E}(\hat{C}_0-C_0) \approx 
\frac{\beta_{0}}{\beta_{1}^3}\sigma^2_{\hat{\beta}_1} - \frac{1}{\beta_{1}^2}cov(\hat{\beta}_0,\hat{\beta}_1) 
\end{aligned}
\end{equation}

By means of these approximations, the relevant quantities are linearized, thereby making $var(\hat{C}_0)$ accessible to the application of Elfving's theorem as will be shown in the Results section. In contrast, the approximation of $Bias(\hat{C}_0)$ does not allow its application.\\


In the following, expressions are provided for the variances $\sigma^2_{\hat \beta_0}$ and $\sigma^2_{\hat \beta_1}$ appearing in \ref{varf} and \ref{biasf} for properly weighted regression. In the homoscedastic case, these expressions are standard and can be found in any textbook on regression analysis, see e.g., \cite{Montgomery.2013}, but might not be as easily accessible for the heteroscedastic case. Therefore, these formulas are given below for the heteroscedastic case when weighted regression is applied with weights $w_i$ inversely proportional to the actual variances, i.e., $w_i=1/\sigma_i^2$ (details on the derivation of this result can be found in the Supplement):

\begin{equation}\label{covar} 
var(\hat{\bm{\beta}})=\left(\begin{matrix}\sigma_{{\hat{\beta}}_0}^2&cov\left({\hat{\beta}}_0,{\hat{\beta}}_1\right)\\cov\left({\hat{\beta}}_0,{\hat{\beta}}_1\right)&\sigma_{{\hat{\beta}}_1}^2\\\end{matrix}\right)=\frac{1}{\sum{w_i\sum{w_ix_i^2-\left(\sum{w_ix_i}\right)^2}}}\left(\begin{matrix}\sum{w_ix_i^2}&-\sum{w_ix}_i\\-\sum{w_ix}_i&\sum w_i\\\end{matrix}\right)
\end{equation}


\paragraph{Some key concepts from optimal design theory}\mbox{}\\

Trying to find an optimal exact design for a fixed number $n$ of experimental runs typically leads to a difficult optimization problem: the support points and their integer replication numbers must be determined simultaneously. To make this problem more tractable the integrality constraint on the number of replications is relaxed. Instead of working directly with integer counts, proportions of observations assigned to design points are considered. This yields a continuous formulation in which a design is represented by a probability measure on $\mathcal{X}$. The resulting optimization problem is typically convex and admits powerful analytical and geometric tools \cite{pukelsheim.2006}. This motivates the distinction between \textit{exact designs}, which correspond to finite samples with integer replication numbers, and \textit{approximate designs}, which arise from this continuous relaxation. The latter serve both as a theoretical benchmark and as a practical guide for constructing efficient exact designs.\\ 

An exact design $\xi_n$ for sample size $n$ is defined by
\[
\xi_n=\{(x_i,n_i)\}_{i=1}^{n_e} \ \text{with} \ n_i \in \mathbb{N}, 1 \leq n_i \leq n \ \text{and} \ \sum \ n_i=n
\]
where $x_i \in \mathcal{X}$ are the \textit{support points} of the design and the $n_i$ are the numbers of measurements taken at each of these points.\\ 

Each exact design induces a corresponding approximate design $\xi$ \cite{kiefer.1959}. By setting \textit{weights} $\kappa_i = n_i/n$ ($\kappa$ is used for a clear distinction from the symbol for the weights in the context of weighted regression) the induced approximate design is given by

\vspace*{-.4\baselineskip}

\[
\xi=\{(x_i,\kappa_i)\}_{i=1}^{n_e}.
\]

\vspace*{.1\baselineskip}

In general, an approximate design $\xi$ is defined as a probability distribution over $\mathcal{X}$, i.e., each support point $x_i \in \mathcal{X}$ of the design is assigned a weight $\kappa_i$ ($\sum \kappa_i = 1$ and $\kappa_i \ge 0$), which represents the proportion of total observations at point $x_i$ and induces a design specific  weight function $\kappa_{\xi}(\cdot)$ on $\mathcal{X}$ by $\kappa_{\xi}(x):=\kappa_i$ if $x=x_i$ and $\kappa_{\xi}(x):=0$ for $x$ not part of the support of $\xi$. According to \cite{kiefer.1959} "These measures will always have finite support, this being the only type which it is practically meaningful or necessary to consider". This can be justified formally in the case of linear regression by Carathéodory's theorem \cite{tyrrell.1970}, which implies that it suffices to restrict attention to designs with finite support. The set of support points of such designs $\xi_n$ and $\xi$ is given by
\[
\text{supp}(\xi_n) = \{x_1,...,x_{n_e}\}  \ \text{and}  \  \text{supp}(\xi) = \{ x \in \mathcal{X} \mid \kappa_{\xi}(x) > 0 \}
\]

with $|\text{supp}(\xi_n)|  \ \text{and} \ |\text{supp}(\xi)|$ denoting the number of their support points, e.g., $|\text{supp}(\xi_n)| =n_e$.\\

To compare and optimize designs in a way that is independent of the sample size $n$ the \textit{information matrix} of an approximate design $\xi$ with finite support of size $n_e$ (to include infinite supports an integral notation would be necessary) is defined by

\vspace*{-.5\baselineskip}

\[
M(\xi):=\sum_{i=1}^{n_e} \kappa_i\, f(x_i) f(x_i)^T
\]

with $x_i \in \text{supp}(\xi)$, $f(x)$ denoting the regression function and $\kappa_i=\kappa_{\xi}(x_i)$. For an exact design $\xi_n$ with $n$ observations,  design matrix $X$ and corresponding approximate design $\xi$, one has $X^TX = nM(\xi)$ (see Supplement). Hence, criteria for the quality of an exact design that are based on the covariance structure of the least squares estimator can be expressed, up to a constant factor, solely in terms of $M(\xi)$, i.e.,

\vspace*{-1.5\baselineskip}

\[
var(\hat{\bm{\beta}}) \propto M(\xi)^{-1}
\]

\vspace*{.1\baselineskip}

with covariance matrix $var(\bm{\hat \beta})$ as given in equation \ref{covar}. Approximate designs therefore replace the exact design problem, which typically involves both the selection of support points and their integer allocations, by a continuous optimization problem over probability measures, with $M(\xi)$ capturing all relevant design information.\\

For practical implementation, i.e., when converting an approximate design into an exact design, the integer replication numbers $n_i$ corresponding to the weights $\kappa_i$ and the total sample size $n$ are obtained by setting $n_i \approx n \kappa_i$ and applying a rounding procedure such that $\sum_i n_i = n$. For example, a so called efficient rounding procedure (see, e.g., \cite{pukelsheim.2006}) transforms an approximate design into an exact design of size $n$ by assigning integer replication numbers $n_i$ such that the resulting information matrix is as close as possible to that of the approximate design, thereby minimizing the loss in the chosen optimality criterion. In addition, constraints such as $n_i \geq 1$ for all support points may be imposed to ensure a nonsingular information matrix and hence estimability of the model parameters.\\


Let $\xi_n$ be an exact design with $n$ observations used to estimate $\bm \beta=(\beta_1,...,\beta_k)^T$. For this design, let $\hat{\bm \beta}$ denote the least squares estimator of $\bm \beta$. Then, for arbitrary vectors $\bm c \in \mathbb{R}^k$, the estimator $\bm c^T \hat{\bm \beta}$ is the BLUE of $\bm c^T \bm \beta$, and its variance satisfies
\[
var(\bm c^T \hat{\bm \beta})=\bm c^T var(\hat{\bm \beta}) \bm c = \frac{\sigma^2}{n} \, \bm c^T M(\xi)^{-1} \bm c
\]

with $\xi$ denoting the corresponding approximate design of $\xi_n$ and $\sigma^2$ denoting a proportionality constant. Thus, up to the constant factor $\sigma^2/n$, the variance of $\bm c^T \hat{\bm \beta}$ depends on the design only through the information matrix $M(\xi)$. This suggests that an optimal BLUE with respect to the given design region $\mathcal{X}$ can be attained by an appropriate choice of the design which motivates the definition of $c$-optimality: an approximate design $\xi^*$ on $\mathcal X$ is called \textbf{$\bm c$-optimal} if

\[
\xi^* \in \arg\min_{\xi} \; \bm c^T M(\xi)^{-1} \bm c,
\]

where the minimum is taken over all approximate designs $\xi$ on $\mathcal X$ with nonsingular information matrix. This analytic formulation admits an equivalent geometric characterization which leads to Elfving's theorem.\\

\vspace{-.2cm}

\subsection{Applying optimality theory: Elfving's Theorem}

Since the application of the theory of Karush, Kuhn, and Tucker \cite{karush.1939} does not yield generic solutions for the problem of minimizing the variance of $\hat C_0$, i.e., needs to be applied separately for each specific variance structure, we will in the following present a result from optimality theory that allows to derive a solution which is valid irrespective of the present variance structure. 
The goal is to provide a comprehensive overview of this approach and its application to the problem at hand, either by explicitly presenting the relevant results along with their derivations or by referring to the corresponding results in the literature (see, e.g., \cite{pukelsheim.2006}), so that interested readers can find everything they need to fully follow the line of reasoning.\\

In the following, for an arbitrary subset $U \subset \mathbb{R}^p$, the notation $\mathcal{U} = \operatorname{conv}(U)$ is used  to denote its \textit{convex hull}, i.e., the set of all convex combinations of elements from $U$ which are of the form $\sum_i \kappa_i u_i$ with $\kappa_i \ge 0$, $\sum_i \kappa_i = 1$ and $u_i \in U$. Any such set $U$ will be referred to as a \textit{generating set} of $\mathcal U$. Further, let $\partial \mathcal{U}$ denote the \textit{boundary} of $\mathcal{U}$, i.e., the set of points in $\mathcal{U}$ for which every neighborhood contains points both inside and outside of $\mathcal{U}$.\\

\begin{figure*}[t!]   


\centering
       \includegraphics[width=1\textwidth]{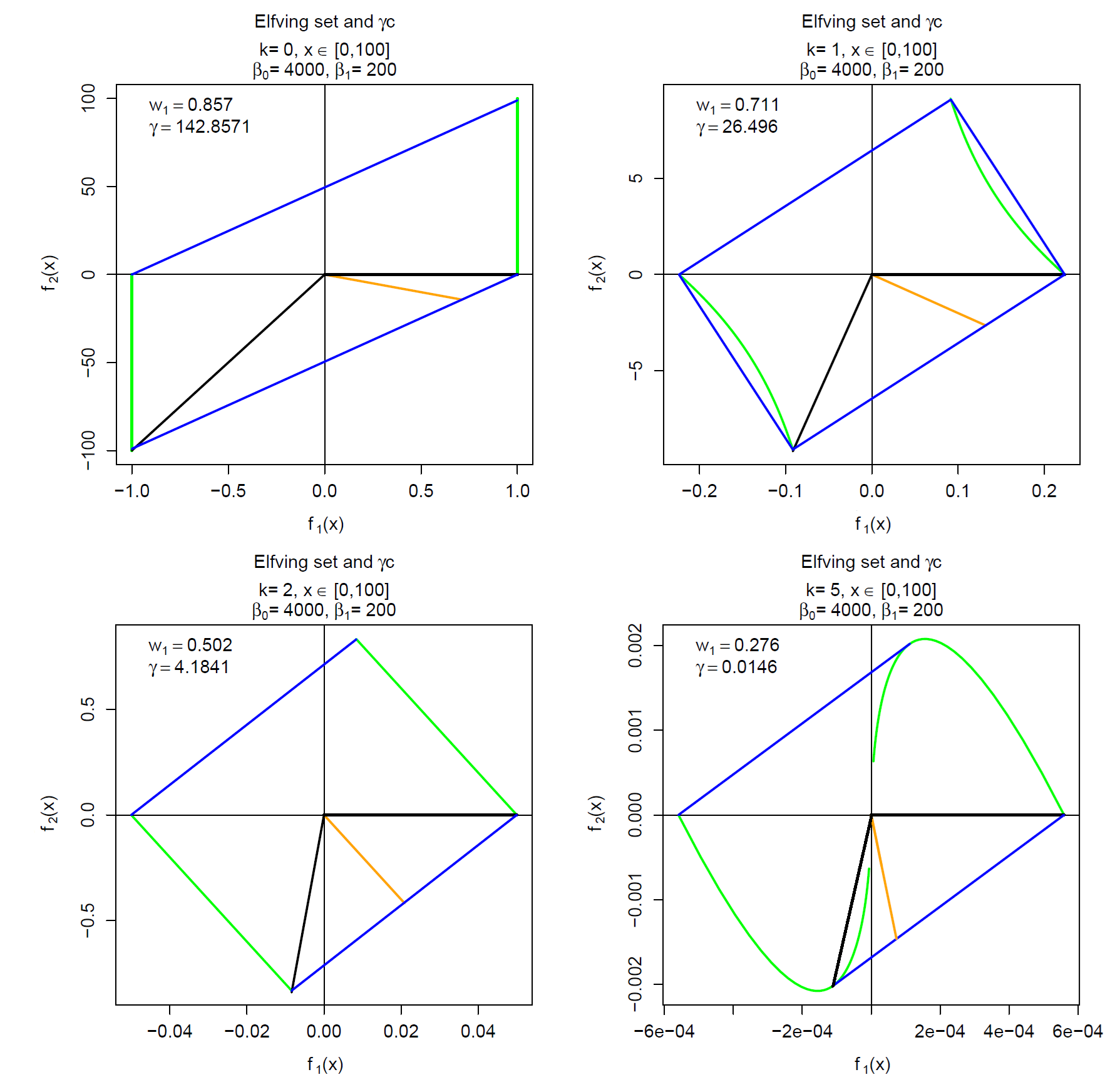}%
       
     \caption{Elfving sets for several different forms of variance growth as determined by $k$ (w.l.o.g. $V_0$ is set to $0$ in all 4 cases). Green lines indicate $f(x)$ and $-f(x), x \in [0,r]$, blue lines indicate the boundaries of the Elfving set where not given by the green lines, bold black lines indicate $f(0)$ and $-f(x^*)$ respectively and orange lines indicate $\gamma \bm c$.}

\label{Elvset}

\end{figure*}

\vspace*{3\baselineskip}

\begin{newthm}[Elfving, 1952, \cite{elfving.1952}, \cite{dette.1993}]
\label{Elfthm}

Given a compact design region $\mathcal{X}$ and a regression function $f(\cdot)$, the Elfving set $\mathcal{E}$ is defined as
\vspace*{-.25\baselineskip}
\[
\mathcal{E} = \operatorname{conv} \left( \{ f(x) \mid x \in \mathcal{X} \} \cup \{ -f(x) \mid x \in \mathcal{X} \} \right)
\]
with generating set $E=\{\pm f(x) \mid x \in \mathcal{X}\}$.\\

A design $\xi^* = \{(x_i,\kappa_i)\}_{i=1}^k$ is $\bm c$\textbf{-optimal} for estimating $\bm{c}^T \bm{\beta}$ if and only if there exist $\gamma > 0$ and signs $e_i \in \{-1,1\}$ such that
\vspace*{-.5\baselineskip}
\[
\gamma \bm{c} = \sum_{i=1}^k \kappa_i e_i f(x_i)
\quad \text{and} \quad
\gamma \bm{c} \in \partial \mathcal{E}.
\]

\end{newthm}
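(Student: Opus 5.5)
We prove Elfving's theorem through the duality between the Elfving set $\mathcal{E}$ and the quadratic forms $\bm c^T M(\xi)^{-1}\bm c$. The first step is the lower bound. Let $\gamma^*>0$ be the scale at which the ray through $\bm c$ leaves $\mathcal{E}$. Since $\mathcal{E}$ is compact, convex, symmetric and (assuming $f(\mathcal X)$ spans $\mathbb{R}^2$) has nonempty interior, this scale is $\gamma^*=\sup\{\gamma>0:\gamma\bm c\in\mathcal E\}$, and $\gamma^*\bm c\in\partial\mathcal E$. By the supporting hyperplane theorem there is a vector $\bm h\neq 0$ with $\bm h^T u\le 1$ for all $u\in\mathcal E$ and $\bm h^T\gamma^*\bm c=1$. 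Because $\mathcal E$ contains $\pm f(x)$, this gives $|\bm h^T f(x)|\le 1$ for all $x\in\mathcal X$, and hence $\bm h^T M(\xi)\bm h=\sum\kappa_i(\bm h^Tf(x_i))^2\le 1$ for every design $\xi$. The Cauchy--Schwarz inequality in the $M(\xi)$-inner product, $(\bm h^T\bm c)^2\le (\bm c^TM^{-1}\bm c)(\bm h^TM\bm h)$, then gives
\[
\bm c^T M(\xi)^{-1}\bm c\;\ge\;(\bm h^T\bm c)^2=(\gamma^*)^{-2}
\]
for all nonsingular $\xi$.

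The second step is attainment and sufficiency. Suppose $\gamma\bm c=\sum\kappa_ie_if(x_i)$ with $\gamma\bm c\in\partial\mathcal E$. Since $\mathcal E$ is convex and contains $0$ in its interior, each ray from the origin meets $\partial\mathcal E$ exactly once, so $\gamma=\gamma^*$. We need $\bm c^TM(\xi)^{-1}\bm c\le \gamma^{-2}$. Set $A=\sum\kappa_i e_i f(x_i)\bm a^T$-type cross terms aside and argue via a matrix Cauchy--Schwarz inequality: for any $\bm z$,
\[
(\bm z^T\gamma\bm c)^2=\Bigl(\sum\kappa_ie_i\bm z^Tf(x_i)\Bigr)^2\le\sum\kappa_i(\bm z^Tf(x_i))^2=\bm z^TM(\xi)\bm z.
\]
Choosing $\bm z=M(\xi)^{-1}\bm c$ yields $\gamma^2(\bm c^TM^{-1}\bm c)^2\le \bm c^TM^{-1}\bm c$, that is, $\bm c^TM^{-1}\bm c\le\gamma^{-2}$. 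Combined with the first step, this equality shows that $\xi$ is optimal. If $M(\xi)$ is singular but $\bm c$ lies in its range, the same argument goes through with a generalized inverse.

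The third step is necessity, which I expect to be the main obstacle. Let $\xi$ be $\bm c$-optimal with value $(\gamma^*)^{-2}$. We must produce signs $e_i$ for which $\sum\kappa_ie_if(x_i)=\gamma^*\bm c$. Equality in the first step forces equality in both inequalities. Equality in the Cauchy--Schwarz inequality gives $\bm h\propto M^{-1}\bm c$, and $\bm h^TM\bm h=1$ gives $|\bm h^Tf(x_i)|=1$ for every support point. Define $e_i=\bm h^Tf(x_i)\in\{\pm1\}$. Then
\[
\sum\kappa_ie_if(x_i)=\sum\kappa_if(x_i)f(x_i)^T\bm h=M\bm h.
\]
Using $\bm h=M^{-1}\bm c/(\bm c^TM^{-1}\bm c)^{1/2}\cdot(\text{sign})$ and $\bm c^TM^{-1}\bm c=(\gamma^*)^{-2}$, this gives $M\bm h=\gamma^*\bm c$ after fixing the sign via $\bm h^T\bm c>0$, and $\gamma^*\bm c\in\partial\mathcal E$ by the definition of $\gamma^*$.

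The delicate points are two. First, we must justify that the hyperplane $\bm h$ from the first step can be taken to be the one induced by the optimal design; this holds because any $\bm h$ attaining equality must be proportional to $M^{-1}\bm c$, so this is the candidate to check. Second, we must verify that the boundary point is reached with positive $\gamma$, which uses the interior nonemptiness of $\mathcal E$.
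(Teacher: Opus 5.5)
The paper does not prove this statement. It quotes Elfving's theorem from Elfving (1952) and Dette (1993), uses it as a black box, and puts its own effort into the problem-specific geometry: Carath\'eodory, the supporting line $L$, and the uniqueness proposition.

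Your argument is the standard self-contained duality proof, and it is correct:
\begin{itemize}
\item The supporting vector $\bm h$ at $\gamma^*\bm c$ can be normalized to $\bm h^T\gamma^*\bm c=1$ because $0\in\operatorname{int}\mathcal E$. Together with Cauchy--Schwarz in the $M(\xi)$-geometry, this gives $\bm c^TM(\xi)^{-1}\bm c\ge(\gamma^*)^{-2}$.
\item The inequality $(\bm z^T\gamma\bm c)^2\le\bm z^TM(\xi)\bm z$ shows that every Elfving representation attains this bound.
\item The equality case forces $\bm h=\gamma^*M(\xi)^{-1}\bm c$ and $|\bm h^Tf(x_i)|=1$ on the support, which yields the signs $e_i=\bm h^Tf(x_i)$.
\end{itemize}

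Compared with the bare citation, your route gives the paper two things it could use. First, it identifies the optimal value $\bm c^TM(\xi^*)^{-1}\bm c=\gamma^{-2}$. So the approximate standard deviation of $\hat C_0$ under the optimal approximate design is simply $\sigma/(\gamma\sqrt n)$, read off the Elfving set. Second, your necessity step already contains the paper's uniqueness proposition. Every optimal design must satisfy $e_if(x_i)\in L$, so once $L\cap E=\{f(0),-f(x^*)\}$ is known, the support is $\{0,x^*\}$ with $e_1=+1$ and $e_2=-1$. The weights then follow from the linear system in part (2) of the Proposition, without the separate case analysis (1.a)--(1.c).

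A few points to tidy up:
\begin{itemize}
\item Delete the stray fragment ``Set $A=\ldots$ cross terms aside''; it does nothing.
\item Step 3 starts from ``the optimal value is $(\gamma^*)^{-2}$'', which needs one explicit line. Since $\gamma^*\bm c\in\mathcal E=\operatorname{conv}(E)$, it is a finite combination $\sum_j\lambda_je_jf(x_j)$. Step 2 applies to the induced design because $\bm c\in\operatorname{span}\{f(x_j)\}=\operatorname{range}M(\xi)$ holds automatically.
\item Your generalized-inverse remark is not optional and should be the definition. The paper minimizes only over nonsingular $M(\xi)$, and under that definition the ``if'' direction fails for singular Elfving designs. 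The paper's own case $C_0=0$ is an example: $\gamma\bm c=f(0)$, the Elfving design is the one-point design at $x=0$, and no nonsingular design attains the infimum. So state the theorem over designs with $\bm c\in\operatorname{range}M(\xi)$.
\item The first ``delicate point'' you flag is not actually delicate. Any supporting $\bm h$ at $\gamma^*\bm c$ works, and the equality case forces it to equal $\gamma^*M(\xi)^{-1}\bm c$.
\end{itemize}
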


Elfving's Theorem provides a necessary and sufficient \textit{geometric characterization} of $c$-optimality in terms of the convex hull of the regression function $f(\cdot)$ (see Fig. \ref{Elvset}). Provided that the regression vectors \(f(x)\), \(x\in[0,r]\), span \(\mathbb R^2\), $\mathcal{E}$ has nonempty interior in which it  contains the origin. For standard addition this is always the case, since the vectors $f(0)=(1/\sqrt{v(0)},0)$ and $f(x)=(1/\sqrt{v(x)},x/\sqrt{v(x)}), x > 0,$ are not collinear. With respect to the existence of a $c$-optimal design, this implies that such a design exists if there exists a scalar $\gamma > 0$ such that $\gamma \bm c \in \partial \mathcal{E}$. Any boundary point $\gamma \bm c$ admits a representation as a convex combination of elements in $E$, which induces a design $\xi^*$ that minimizes $\bm c^T M(\xi)^{-1}\bm c$. In particular, Elfving's Theorem allows one to identify optimal support points and corresponding weights via a representation of a certain boundary point of $\mathcal{E}$. Questions of uniqueness, as well as the exact number of support points, however, generally require additional arguments beyond Elfving's result.\\

\section{Results}

\subsection{Elfving's Theorem: applicability and optimal solution} 
\label{subsec3.1}

\vspace*{1\baselineskip}

\paragraph{Standard addition and the applicability of Elfving's Theorem}\mbox{}\\

Minimizing the approximate variance of $\hat C_0$, as obtained in equation  \ref{varf} by the method of error propagation, is equivalent to solving a properly stated c-optimal design problem, since the approximation of the variance of $\hat C_0$, given by  equation \ref{varf} as

\[
\left(\frac{1}{\beta_1}\right)^2\sigma_{{\hat{\beta}}_0}^2+\left(\frac{\beta_0}{\beta_1^2}\right)^2\sigma_{{\hat{\beta}}_1}^2-\frac{2cov\left({\hat{\beta}}_0,{\hat{\beta}}_1\right)\beta_0}{\beta_1^3}
\]

\vspace*{.5\baselineskip}

can be written in terms of the variance of $\bm c^T\bm{\hat \beta}$ when choosing $\bm c$ appropriately. The proper $\bm{c}$ can be found by solving the equation

\begin{equation*}
\begin{aligned}
var(\bm c^T\bm{\hat \beta})=\bm c^T var(\bm{\hat \beta}) \bm c
&= \bm c^T \left(\begin{matrix}\sigma_{{\hat{\beta}}_0}^2&cov\left({\hat{\beta}}_0,{\hat{\beta}}_1\right)\\cov\left({\hat{\beta}}_0,{\hat{\beta}}_1\right)&\sigma_{{\hat{\beta}}_1}^2\\\end{matrix}\right) \bm c\\
&=\left(\frac{1}{\beta_1}\right)^2\sigma_{{\hat{\beta}}_0}^2+\left(\frac{\beta_0}{\beta_1^2}\right)^2\sigma_{{\hat{\beta}}_1}^2-\frac{2cov\left({\hat{\beta}}_0,{\hat{\beta}}_1\right)\beta_0}{\beta_1^3}
\end{aligned}
\end{equation*}

\vspace*{.5\baselineskip}

with respect to $\bm c= (c_1,c_2)^T$, which yields 

\vspace*{-1\baselineskip}

\[
\bm c =
\begin{pmatrix}
1/\beta_1\\
-\beta_0/\beta_1^2
\end{pmatrix}
.
\]

This allows the use of Elfving's theorem.\\

\paragraph{Number of support points necessary}\mbox{}\\

\textit{Carath\'eodory's Theorem} \cite{tyrrell.1970} states that any point in the convex hull $ \mathcal{U}$ of a set $U \subset \mathbb{R}^p$ can be expressed as a convex combination of at most $p+1$ points from $U$. Since $p=2$ in the standard addition context, this implies that any boundary point of an Elfving set $\mathcal{E}$ corresponding to a certain standard addition problem admits a representation involving at most $2+1$ elements from its generating set $E=\{\pm f(x)\mid x\in[0,r]\}=f([0,r])\cup(-f([0,r]))$. Furthermore, $\mathcal{E}$ can be easily illustrated in this context, since the behavior of the variance of the measurement error is given by the smooth variance function $v(x)=V_0 + (\beta_1(C_0 + x))^k$. Therefore, $\{f(x)=(1/\sqrt{v(x)},x/\sqrt{v(x)})^T \mid x \in \mathcal{X}\}$ describes a smooth compact curve in $\mathbb{R}^2$. This curve is, depending on $k$, either convex or concave (see Fig. \ref{Elvset}  for $k=0,1,2$ (convex) and $5$ (concave)). Thus, the boundary of such an  Elfving set can only consist of line segments  joining points of $E$ and curve arcs contained in $E$. Since the coordinates of $f(x)$ are always non-negative for $x \geq 0$, the image $f([0,r])$ and its reflection at the origin, $-f[0,r]$, are always fully contained in the first and third quadrant. Hence, in the second and fourth quadrants, the boundary of \(\mathcal E\) is formed by exactly one line segment in each case, connecting the respective points of \(f([0,r])\) and \(-f([0,r])\). Consequently, an optimal design can either be constructed using a single support point if $\gamma \bm{c}$ coincides with a point in $\partial \mathcal{E} \cap E$, or two points if it lies on a line segment between two distinct points contained in $E$. Therefore, in this setting, the representation reduces in fact to at most $2$ support points. Due to the special structure of $ \mathcal{E}$ there are definitely two support points needed if $\bm c$ lies in the second or fourth quadrant of $ \mathbb{R}^2$ where no points of $E$ can be found. In the case of standard addition, $\gamma \bm c$ is always in the fourth quadrant if $\beta_i > 0, i=0,1,$ since $\bm c=(1/\beta_1,-\beta_0/\beta_1^2)$. This is equal to $C_0 > 0$ $(\Leftrightarrow \beta_0 > 0$) and a sensitivity  $>0$ of the analytical instrument used ($\Leftrightarrow \beta_1 > 0$). Note: the support points of the respective optimal design are therefore always the same, irrespective of $\beta_0$ and $\beta_1$ as long as $\beta_0, \beta_1 > 0$. A reasonable exception occurs only for the case where $\beta_1>0$ and $C_0=0$. The latter is equal to $\beta_0=0$ and thus $\gamma \bm c=f(0) \in E$. Therefore, the support of the corresponding optimal design consists only of $x_1=0$ which is in line with the intuition that the best possible result with respect to $\sigma_{\hat C_0}$ for the case $C_0=0$ can be achieved if only the unspiked sample is measured.\\

\paragraph{\boldmath Determination and uniqueness of the support points of $\xi^*$}\mbox{}\\

Unless otherwise stated, for all that follows, $p$ is restricted to $2$, i.e., $\mathbb{R}^p=\mathbb{R}^2$, since this is the relevant case for standard addition. To introduce the notion of a \textit{supporting line} \cite{pukelsheim.2006}, let \(M\) be an arbitrary set, \(L_T\) an arbitrary tangent to \(M\), and \(L_S\) an arbitrary supporting line of \(M\). Similar to $L_T$, $L_S$ touches $M$ at its boundary; however, unlike tangents, which may intersect the interior $\operatorname{int}(M)=M \setminus \partial M$ of the set away from a point of contact $x_i \in \partial M \cap L_i, i=T,S$, a supporting line never intersects the interior of the set and keeps therefore the entire set on one side. Moreover, supporting lines may exist at boundary points where no tangent is defined, for instance at the corners of a rectangle, where infinitely many supporting lines exist, whereas no tangent line is available. A line $L$ that touches the boundary of a set $M$ at $x_0 \in \partial M$ is said to \textit{locally support} $M$ at $x_0$ when there exists a ball $B_\epsilon(x_0)$, i.e., a ball with radius $\epsilon > 0$ arbitrary small centered around $x_0$, such that $L \cap \operatorname{int}(M) \cap B_\epsilon(x_0) = \emptyset$. If this holds also for $\epsilon$ arbitrary large, $L$ becomes a supporting line. Therefore, by definition, $L_S$ is locally supporting $M$ at every point of contact, but this is not necessarily true for $L_T$. Consider, for example, the inner boundary of a closed annulus in $\mathbb{R}^2$ where no such $B_\epsilon$ exists for any of its elements. In contrast to non-convex sets, every line $L$ locally supporting a convex set $C$ at an arbitrary point of contact, is automatically also a supporting line of $C$. If there existed points in $C$ lying on different sides of $L$, then $L$ would separate points of $C$ and hence intersect the interior of $C$ since, by convexity, for any two points in $C$ the line segment connecting them is entirely contained in $C$ thus contradicting the assumed local support.\\

\begin{figure*}[b!]   
  
\centering
       \includegraphics[width=.6\textwidth]{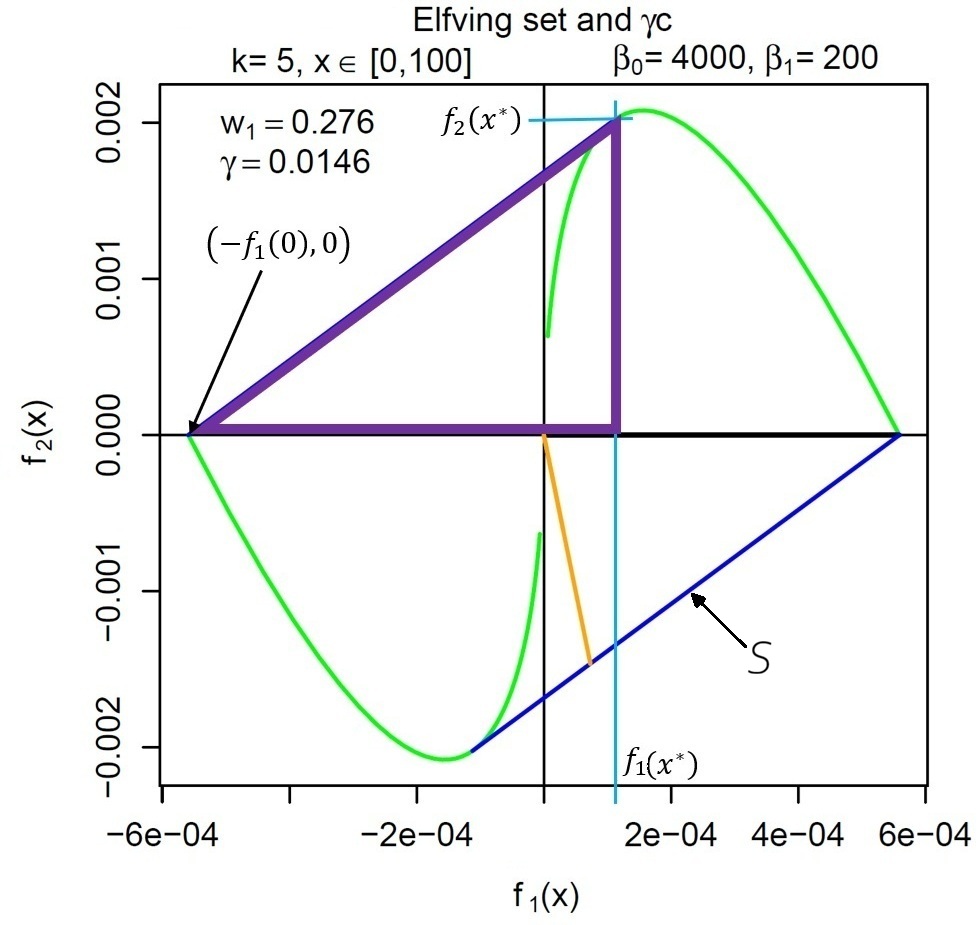}%
       
     \caption{Line segment tangent to $f([0,r])$, the green curve on the right, at $f(x^*)=(f_1(x^*),f_2(x^*))$, connecting $f(x^*)$ and $-f(0)$ (hypotenuse of the bold purple triangle).}

\label{TangLine}

\end{figure*}\

\vspace*{-1\baselineskip}

Applying the concept of supporting lines to the convex Elfing set $\mathcal E$ yields: Since, as discussed above, the vector $\bm{c}$ has a positive first and a negative second coordinate if $C_0 > 0$, the intersection between $\gamma\bm{c}$ and $\partial\mathcal{E}$ is in this case located in the fourth quadrant. It lies on the line segment connecting $-f(x^*)$ and $f(0)$, denoted $S$ in the following, with $x^* \leq r$ (see Fig. \ref{Elvset}). The point $x^* \in [0,r]$ is defined as the the smallest value such that $f(x^*)\in\ell\cap f([0,r]),$ where $\ell$ is obtained by clockwise rotating a line about $-f(0)$ until it first touches the curve $f([0,r])$ (the green curves on the right in the subfigures of Fig. \ref{Elvset}) without intersecting it; its exact value, however, depends on $k$ (see Fig. \ref{TangLine} for an illustration of the case $k=5$). Additional considerations show, that for the line $\ell$, passing through $-f(0)$ and $f(x^*)$, it holds that $\ell \cap \{\pm f(x) : x \in [0,r]\}=\{-f(0),f(x^*)\}$ for all $k\geq0$ (see the Appendix for a general discussion of this topic), i.e., that $f(x^*)$ is also the unique point of tangency of $\ell$ with $f([0,r])$. Therefore, it holds for the line  $L$ passing through $-f(x^*)$ and $f(0)$, that it is a supporting line of $\mathcal{E}$ with $L \cap \mathcal{E}=S$. Furthermore, $\mathcal{E}$ is by construction above $L$ for all $k \geq 0$, i.e., every point of $\mathcal{E}$ has second coordinate at least as large as the point on $L$ with the same first coordinate. Since $ f(0)$ and $- f(x^*)$ are the generating elements of $S$ they can be convexly combined such that $\gamma\bm{c}=\kappa_1f(0)+\kappa_2(-f(x^*))$ with $\kappa_1+\kappa_2=1$. Therefore, due to Elfving's Theorem, the design $\xi^*$, defined by the support points $\{0,x^*\}$ and the corresponding weights $\kappa_1$ and $\kappa_2$, is $c$-optimal. The uniqueness of the optimal design $\xi^*$ follows from the fact, that $\xi^*$ is optimal if and only if $\gamma \bm{c}$ can be represented by a convex combination of the elements of $\{\pm f(x_i) \mid x_i \in \text{supp}(\xi^*)\}$. This is possible only for a convex combination of $ f(0)$ and $- f(x^*)$ which is demonstrated in the following.  \\

\begin{lemma*}[Representation of the supporting line $L$ of $\mathcal{E} \subset \mathbb{R}^2$ passing through $f(0)$ and $-f(x^*)$]\label{lem}\mbox{}\\
\hypertarget{lem}{}

There exist $\bm z \in \mathbb{R}^2$ and $a \in \mathbb{R}$ such that for all $\bm x \in L, \bm z^T\bm x=a$ and for all $\bm x \in \mathcal{E} \setminus L, \bm z^T\bm x > a$. 

\begin{proof}

Like for every affine line in $\mathbb{R}^2$ there exist $a, b \in \mathbb{R}$ such that it holds for all $\bm x=(x_1,x_2)^T \in L$ that
\[
x_2=a+bx_1.
\]
From this it follows, that with $\bm z = (-b,1)^T$ it holds that 
\[
\bm z^T\bm x=a \ \text{for all} \ \bm x \in L. 
\] 
Since all $\bm x=(x_1,x_2)^T \in \mathcal{E} \setminus L$ are above $L$, i.e., $x'_2 < x_2$ for $(x_1,x'_2)^T \in L$, it holds that for every $\bm x \in \mathcal{E} \setminus L$, there exists $a' > a$ such that $x_2=a'+bx_1$ and therefore 
\[
\bm z^T\bm x=-bx_1+x_2=-bx_1+a'+bx_1=a' > a.   
\]
\end{proof}
\end{lemma*}

\begin{proposition*}[Uniqueness of $\xi^*$]\label{prop}\mbox{}\\
\hypertarget{prop}{}

Let $\mathcal{E} \subset \mathbb{R}^2$ be the Elfving set arising from the problem of minimizing the approximation of $\sigma^2_{\hat C_0}$ in the context of standard addition (expression \ref{varf}). Then the optimal approximate design $\xi^*$ resulting from the application of Elfving's Theorem is uniquely defined.

\begin{proof}

The Proof is subdivided into two parts: (1) uniqueness of support points $x_i$ and (2) determination of the respective proportions $\kappa_i$.\\ 

\textbf{(1) Uniqueness of support points}\\

This part is based on the lemma given above.\\

Let $L$ be a supporting line of $\mathcal{E}$ passing through $f(0), -f(x^*)$ and $\gamma \bm c$, and let $U \subset \mathcal{E}$. \\

We distinguish three cases:

\medskip
\textbf{(1.a)} Assume $U \cap L = \emptyset$. Then $\bm z^T \bm x > a$ for all $\bm x \in U$. For any $\bm x \in \operatorname{conv}(U)$,
\[
\bm  x = \sum_{i=1}^n \lambda_i \bm x_i, \quad \bm x_i \in U,\ \lambda_i \ge 0,\ \sum_{i=1}^n \lambda_i = 1,  
\]
we obtain
\[
\bm z^T \bm x = \sum_{i=1}^n \lambda_i \bm z^T \bm x_i > \sum_{i=1}^n \lambda_i a = a.
\]
Hence $\bm x \notin L$, and therefore
\[
\operatorname{conv}(U) \cap L = \emptyset.
\]

\medskip
\textbf{(1.b)} Assume $U \cap L = \{\bm u\}$. Let $\bm  x \in \operatorname{conv}(U) \cap L$ with representation
\[
\bm x = \sum_{i=1}^n \lambda_i \bm x_i, \quad \bm x_i \in U, \ \lambda_i > 0,\ \sum_{i=1}^n \lambda_i = 1.
\]
If there exists $i$ with $\bm x_i \neq \bm u$, then $\bm z^T \bm x_i > a$, while $\bm z^T \bm u = a$, hence
\[
\bm z^T \bm x > a,
\]
a contradiction. Thus $\bm x = \bm u$, and
\[
\operatorname{conv}(U) \cap L = \{\bm u\}.
\]

\medskip
\textbf{(1.c)} Assume $U$ contains two distinct points $\bm u_1, \bm u_2 \in L$, and let
\[
\bm x = \lambda \bm u_1 + (1-\lambda)\bm u_2, \quad \lambda \in (0,1),
\]
be a point on the line segment connecting $\bm u_1$ and $\bm u_2$. Suppose that
\[
\bm x = \sum_{i=1}^n \lambda_i \bm x_i, \quad \bm x_i \in U,\;\lambda_i > 0,\;\sum_{i=1}^n \lambda_i = 1,
\]
with at least one $\bm x_j \notin L$. Then $\bm z^T \bm x_j > a$, while $\bm z^T \bm x_i = a$ for all $\bm x_i \in L$, and hence
\[
\bm z^T \bm x > a,
\]
contradicting $\bm x \in L$. Therefore, any such representation uses only points of $L$. Since $\bm x$ lies on the line segment between $\bm u_1$ and $\bm u_2$, it follows that
\[
\bm x = \lambda \bm u_1 + (1-\lambda)\bm u_2
\]
is the unique convex combination representation using points of $U$, and no representation involving points from $U \setminus L$ with strictly positive weights is possible.\\

It follows from \textbf{(1.a)} through \textbf{(1.c)} that the support of the optimal design $\xi^*$ for standard addition is always given by the  points $\{0,x^*\}$, which implies, in particular, that an optimal design for standard addition is always a two-point design.\\

\textbf{(2) Determination of the respective proportions} $\bm \kappa_i$\\

The optimal proportions of measurements, $\kappa_1$ and $\kappa_2$, to be allocated to the support points $0$ and $x^*$ of the optimal design $\xi^*$ follow from solving the equation

\begin{center}
$\gamma\bm{c}=\kappa_1f(0)+\kappa_2(-f(x^*))$ such that $\kappa_1+\kappa_2=1$.\\ 
\end{center}

With

\vspace*{-1\baselineskip} 

\[
f(x)=
\begin{pmatrix}
1/\sqrt{(v(x))}\\
x/\sqrt{(v(x))}
\end{pmatrix}
=(f_1(x),f_2(x))^T
\]

one gets

\vspace*{-.5\baselineskip}

\[
\gamma\bm{c}=\gamma(c_1,c_2)^T=\kappa_1f(0)+\kappa_2(-f(x^*))=(\kappa_1f_1(0)+\kappa_2(-f_1(x^*)) , \kappa_1f_2(0)+\kappa_2(-f_2(x^*)))
\]

which yields

\vspace*{-.5\baselineskip}

\begin{equation}\label{syseq}
    \begin{aligned}
\gamma c_1 &=\kappa_1f_1\left(0\right)-\left(1-\kappa_1\right)f_1(x^*)\\
\gamma c_2 &=\kappa_1f_2\left(0\right)-\left(1-\kappa_1\right)f_2(x^*)
    \end{aligned}
    \quad \Longleftrightarrow \quad
    \begin{aligned}
\gamma c_1 - \kappa_1(f_1(0)+f_1(x^*)) &=-f_1(x^*)\\
\gamma c_2 - \kappa_1(f_2(0)+f_2(x^*)) &=-f_2(x^*)
    \end{aligned}
\end{equation}

\vspace*{.5\baselineskip}

The solution to this system of linear equations is unique if 

\[
c_2(f_1(0)+f_1(x^*)) - c_1(f_2(0)+f_2(x^*))  \neq 0.
\]

Since

\vspace*{-1\baselineskip} 

\[
\bm c =(c_1,c_2)^T= (1/\beta_1,-\beta_0/\beta_1^2)^T \ \text{with} \ \beta_0 \geq 0, \beta_1 > 0 \ \text{and} \ f_i(0) \geq 0, f_i(x^*) > 0, i=1,2,
\]

\vspace*{.5\baselineskip} 

$c_1 > 0$ and $c_2 \leq 0$ guarantees that 

\[
c_2(f_1(0)+f_1(x^*)) - c_1(f_2(0)+f_2(x^*)) < 0
\]

always holds. Therefore, solving the system of equations given in \ref{syseq} with respect to $\kappa_1$ yields the unique solution

\vspace*{-1\baselineskip} 

\begin{equation}\label{eqkap}
\kappa_1=\frac{f_2\left(x^*\right)c_1-f_1\left(x^*\right)c_2}{c_1(f_2\left(0\right)+f_2\left(x^*\right))-c_2(f_1\left(0\right)+f_1\left(x^*\right))}.
\end{equation}

\end{proof}
\end{proposition*}

\vspace*{.5\baselineskip} 


From the \hyperlink{prop}{proposition} and the additional considerations concerning \(\mathcal{E}\) given above (e.g., regarding \(x^*\)), it follows that the optimal design $\xi^*$ for standard addition is, for all $k \geq 0$, a uniquely defined two-point design. This means $\xi^*$ can be unambiguously determined with respect to the support and the proportions of observations. The optimal support, $\text{supp}(\xi^*)$, is equal to $\{0,x^*\}$ and the optimal proportions of observations, $\kappa^*_1$ and $\kappa^*_2$, can be calculated by using formula \ref{eqkap} given in the proposition. $x^*$ is equal to $r$ for $k \leq 2,$ but can be smaller than $r$ for $k > 2$, see e.g., example 4 in Fig. \ref{Elvset} (in this example, $k=5$ and the optimal support is approximately $\{0,18.4\}$ while $r=100$). To get an exact design $\xi_n$ based on $\xi^*$, the optimal number of observations for the unspiked sample, \(n_1^*\), is obtained by appropriately rounding \(\kappa_1^* n\), while the corresponding number for the spiked sample is given by $n_2^* = n - n_1^*$. Since only the rounding of \(\kappa_1^* n\) is necessary, determination of  $n_1^*$ and  $n_2^*$ can be easily achieved without resorting to elaborate rounding strategies.\\


\vspace*{-.5\baselineskip} 

\paragraph{A few additional comments on the calculation of the optimal two-point design}\mbox{}\\

For calculating $\kappa_1$ by applying formula \ref{eqkap} given in the \hyperlink{prop}{proposition} the following quantities are needed:

\vspace*{-1\baselineskip} 

\[
\begin{pmatrix}
f_1(x)\\
f_2(x)
\end{pmatrix}
=
\begin{pmatrix}
1/\sqrt{(v(x))}\\
x/\sqrt{(v(x))}
\end{pmatrix}
\] 

\vspace*{.5\baselineskip}

with $v(x)=V_0+(\beta_1(C_0+x))^k$ with $\beta_1 > 0$ and $k,V_0, C_0 \geq 0$. In addition one needs 

\vspace*{-.5\baselineskip} 

\[
\begin{pmatrix}
c_1\\
c_2
\end{pmatrix}
=
\begin{pmatrix}
1/\beta_1\\
-\beta_0/\beta_1^2
\end{pmatrix}.
\]

\vspace*{.5\baselineskip} 

Unfortunately this means that knowledge of $\beta_0, \beta_1, V_0$ and $k$ is required to calculate the optimal approximate design. No role is played by the number of total measurements $n$ and a constant proportionality factor $\sigma^2 > 0$ which are only necessary to calculate the actual magnitude of the variance. However, since all quantities needed are normally not fully known, it is in practice often not possible to determine the best possible design beforehand. Thus, it is investigated in the following, how well a two-point design with $50:50$ allocation of the measurements performs in comparison to a multi-point design using equidistant spiked concentrations and the optimal two-point design.\\


\subsection{Bias and precision: Significance of the design (including range)}
\label{ResBiasSD}

For a maximum permissible spiked concentration $r$, the design is defined in the Introduction as the vector

\vspace*{-1.5\baselineskip}

\[
\bm{x}=(x_{11},...,x_{1n_1},x_{21},...,x_{2n_2},...,x_{n_e1},...,x_{n_en_{n_e}}) 
\]

with spiked concentrations $x_1,...,x_{n_e} \in [0,r]$ each repeated $n_1,...,n_{n_e}$ times indicating $n_i$ independent measurements allocated at the spiked concentrations $x_i$, i.e., $x_i=x_{i1}=x_{i2}=...=x_{in_i}, i=1,...,n_e$. $\sum n_i = n$ is the total number of measurements taken and for each fixed $n$ different designs can yield, as detailed above, different bias and  precision of the estimator $\hat{C}_0$. It is further shown in \autoref{subsec3.1} that the variance optimal design is a two-point design and that the corresponding optimal allocation of measurements can be determined by formula \ref{eqkap}. In the following, the allocation of measurements for optimal two-point designs for several different settings will be given. Moreover, as outlined above, if $k>2$ (see Fig. \ref{TangLine} for $k=5$), the optimal spiked concentration $x^*$ can be smaller than $r$, thus rendering the application of a two-point design in such cases potentially problematic if $x^*$ is not at least moderately well-known. Hence, to obtain a clearer picture of the complex interaction between the various factors, namely the impact of the number of spiked concentrations, allocation of measurements, permissible range for spiking and variance behavior on the estimator performance, the results of various simulations and calculations are presented below. These results do not only pertain to the variance, but also to the bias of $\hat{C}_0$ to demonstrate how the latter relates to the former.\\

The chosen settings are, for all results shown in this subsection, determined by the parameters $\beta_0=4000$ and $\beta_1=200$ in combination with varying types of variance behavior. The latter is modeled according to equation \ref{varfunc}, i.e., the variance with respect to the spiked concentration $x$ is given by

\vspace{-.6cm}

\[
\sigma^2(x)=\sigma^2v(x)=\sigma^2(V_0+(\beta_1(C_0+x))^k)=\sigma^2(V_0+y^k).
\]

\vspace{.2cm}

$\sigma$ represents a proportionality constant necessary for determining the magnitude of the variance and the exponent $k$ determines the type of variance increase, i.e., $k=0$ represents the homoscedastic case and $k>0$ the heteroscedastic case. For example, $k=2$ represents quadratic variance increase which is often equal to an (almost) constant RSD. $V_0$ is needed to determine the variance for a blank which is denoted in the following by $\sigma^2_0$. If $k>0$, $V_0$ is equal to $\sigma^2_0/\sigma^2$ because $\sigma^2_0=\sigma^2V_0$ since $(\beta_1\cdot 0)^k=0$.  For example, if $\sigma=0.03$ and $\sigma_0=400$, $V_0=400^2/0.03^2\approx 177777778$. If $k=0$, then $V_0$ can be simply set to zero since $(\beta_1\cdot 0)^0=1$.\\ 

The impact of different allocations of measurements for a two-point design ($x_1=0, \ x_2=r$) on precision and bias of $\hat C_0$  for $k=0,2$ is illustrated in Figs. \ref{SDsim} and \ref{Biassim}, respectively. These figures show results for the same settings and in each of it results with respect to simulations and approximations are overlaid. The approximations for variance and bias are gained by using Eqs. \ref{varf} and \ref{biasf}, respectively. In addition, the figures illustrate the influence of the choice of $r$ on the variance, the bias and the optimal allocation of measurements.\\

\begin{table}[h!]
\centering
\caption{Optimal values of $\kappa_1$, $\kappa_1 \cdot n$, and the optimally rounded integer values of $\kappa_1 \cdot n$ for two-point designs with $n=12$ under four different settings. Rounding was carried out subject to the constraint that neither $n_1$ nor $n_2$ is allowed to equal zero.}
\label{opt2p}
\small
\begin{tabular}{*{3}{c} | *{4}{c} | *{4}{c} | *{4}{c} }
 \hline
    \multicolumn{3}{c|}{} & 
     \multicolumn{4}{c|}{range} & 
    \multicolumn{4}{c|}{range} & 
    \multicolumn{4}{c}{range} \\ 
  \hline

    \multicolumn{3}{c|}{parameters} & 
   50 & 100 & 1000 & 10000  & 
   50 & 100 & 1000 & 10000  & 
  50 & 100 & 1000 & 10000   \\ 

  \hline

  $k$ & $\sigma$ & $\sigma_0$ & 
   \multicolumn{4}{c|}{optimal $\kappa_1$} & 
    \multicolumn{4}{c|}{$\kappa_1 \cdot n$} & 
    \multicolumn{4}{c}{$n_1=\kappa_1 \cdot n$ opt. round.} \\ 

 \hline
0 & 400 & 400 & 0.78 & 0.86 & 0.98 & 1.00 & 9.33 & 10.29 & 11.77 & 11.98 & 9 & 10 & 11 & 11 \\ 
  2 & 0.03 & 0 & 0.50 & 0.50 & 0.50 & 0.50 & 6.00 & 6.00 & 6.00 & 6.00 & 6 & 6 & 6 & 6 \\ 
  2 & 0.03 & 20 & 0.50 & 0.50 & 0.50 & 0.50 & 6.04 & 6.04 & 6.04 & 6.04 & 6 & 6 & 6 & 6 \\ 
  2 & 0.03 & 400 & 0.72 & 0.75 & 0.78 & 0.78 & 8.59 & 9.03 & 9.32 & 9.32 & 9 & 9 & 9 & 9 \\
   \hline
\end{tabular}
\end{table}

Tab. \ref{opt2p} contains results corresponding to Fig. \ref{SDsim} in that it presents the optimal number of measurements $n_1$  allocated to $x_1$ for $n=12$ for the same settings underlying the results presented in that figure. The optimal $n_1$ given in Tab. \ref{opt2p} are based on $\kappa_1$, the optimal proportion of measurements for the unspiked sample, gained by applying equation \ref{eqkap}. \\

\begin{figure*}[b!]   
  
\centering
       \includegraphics[width=1\textwidth]{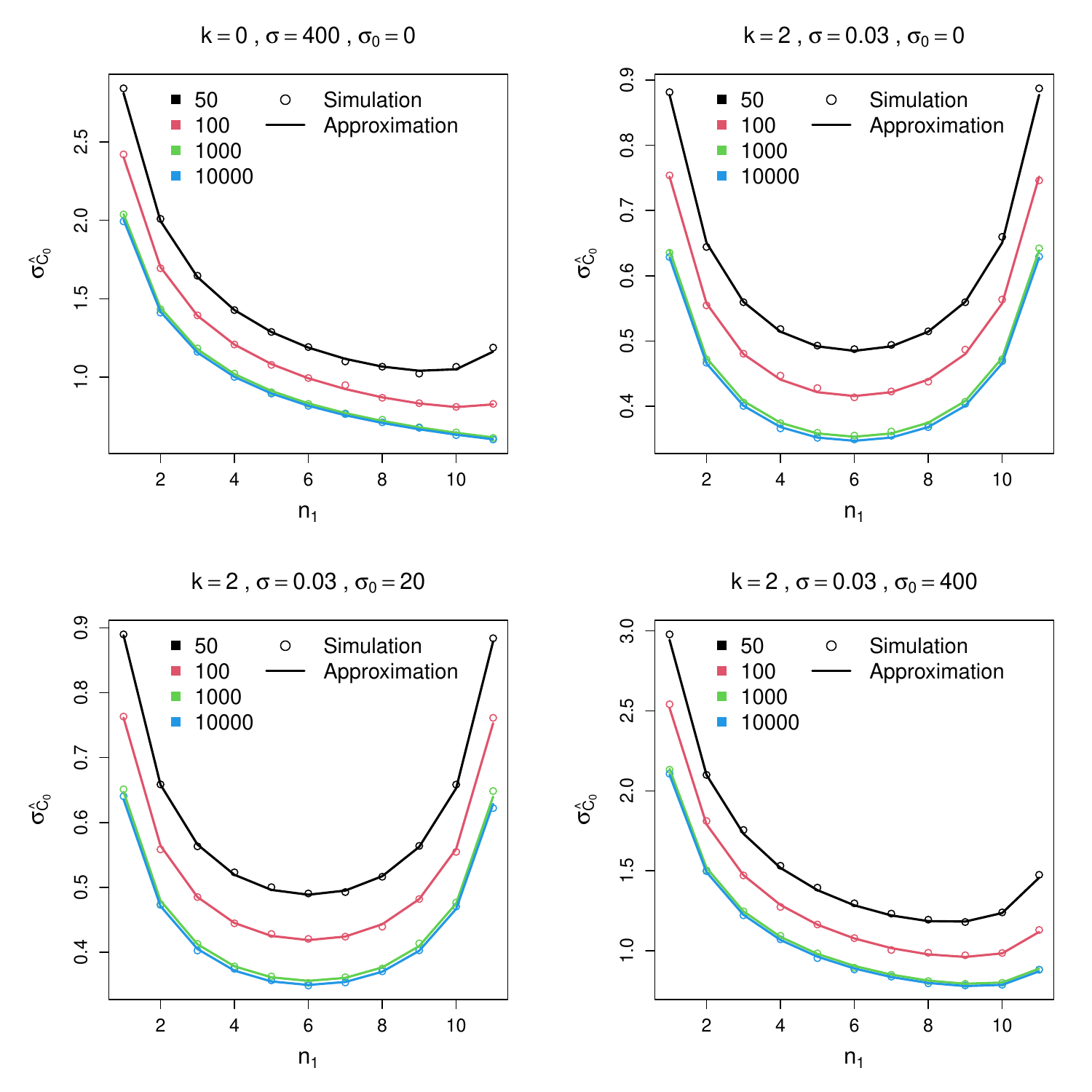}%
       
     \caption{Estimator precision with respect to measurement allocation for the two-point design for $n=12$. Each result is based on simulations using $10^4$ iterations. The numbers on the x-axis denote the numbers of independent measurements $n_1$ for $x_1=0$ ($n_2=n-n_1$ for $x_2=r$). $n_1$ is optimal, when $\sigma_{\hat{C}_0}$, given on the y-axis, is lowest.}

\label{SDsim}

\end{figure*}


\begin{figure*}[t!]   
  
\centering
       \includegraphics[width=1\textwidth]{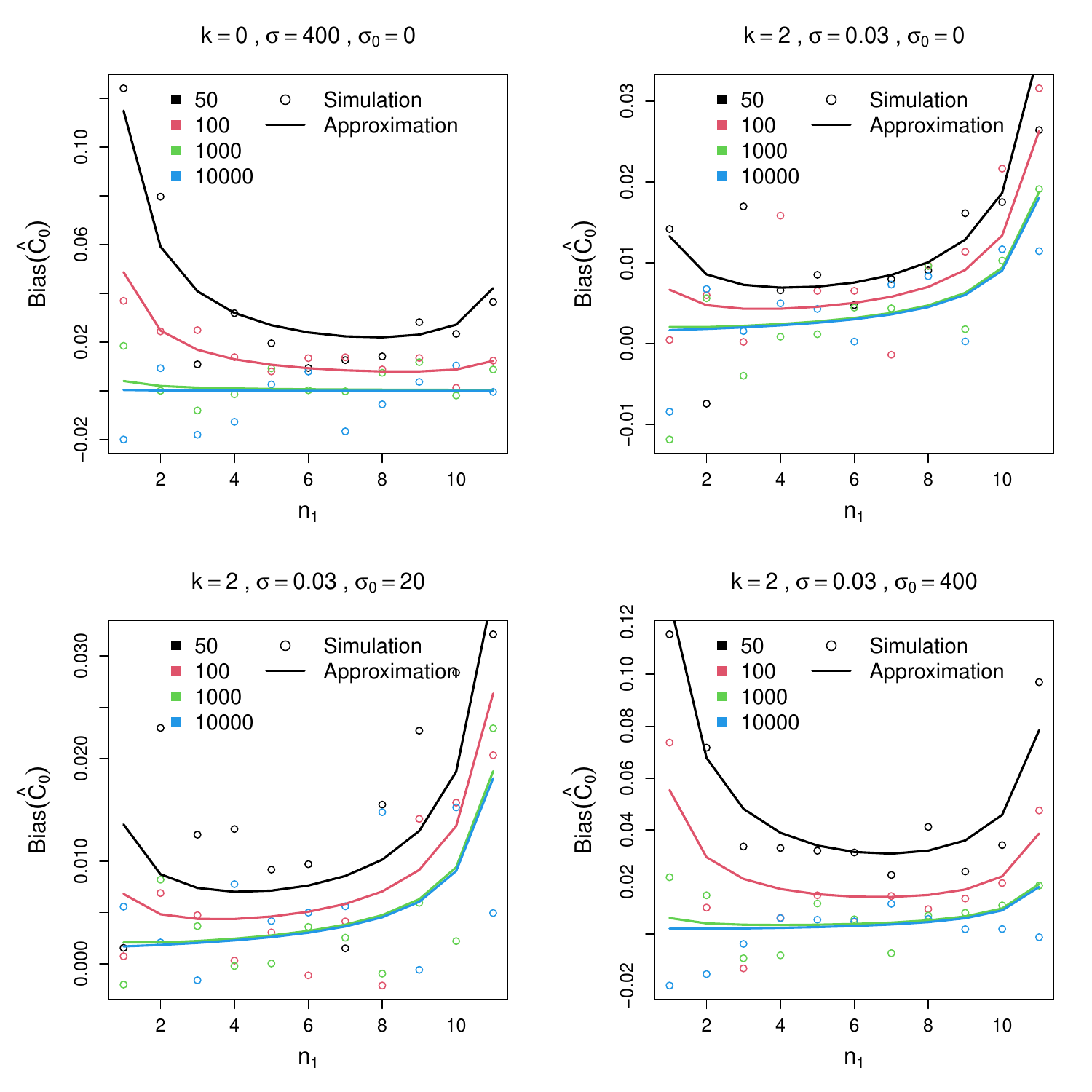}%
       
     \caption{Estimator bias with respect to measurement allocation for the two-point design for $n=12$, based on simulations using $10^4$ iterations each. The numbers on the x-axis denote the numbers of independent measurements $n_1$ for $x_1=0$ ($n_2=n-n_1$ for $x_2=r$).}

\label{Biassim}  

\end{figure*}

\begin{table}[h!]
\caption{The six different settings on which the results given in Tabs. \ref{SDxstar} and \ref{Biasxstar} are based.}
\label{Param}
\centering
\begin{tabular}{c|c c c l c c c}
Setting   & $\beta_0$ & $\beta_1$ & $V_0$ & $\sigma$ & $k$ & $r_{max}$ \\ \hline
1 &4000  & 200 & 0 &400   & 0 & 10$^4$ \\
2 &4000  & 200 &  0& 5 & 1 &  10$^4$\\
3 &4000  & 200 & 0 & 0.03  & 2 & 10$^4$  \\
4 &4000  & 200 &  0& 0.0003  &3  &10$^3$  \\
5 &4000  & 200 &  0& 0.000003 & 4 & 10$^2$ \\
6 &4000  & 200 &  0& 0.0000003 & 5 &10$^2$  
\end{tabular}
\end{table}

Tables \ref{SDxstar} and \ref{Biasxstar} give a detailed comparison of multi-point designs combined with unweighted and weighted regression and two-point designs with 50:50 allocation of measurements to optimal designs in terms of standard deviation and bias of $\hat C_0$. All considered multi-point designs consist of $n_e=4$ equidistantly spiked concentrations in $[0,r]$, with $x_1=0$ and $x_4=r$. The notion "weighted multi-point design" indicates in the following that the determination of $C_0$ based on the data from applying the respective multi-point design is done by using weighted regression with weights inversely proportional to the actual variances. The 50:50 two-point design consists of the spiked concentrations $x_1=0$ and $x_2=r$, with equal allocation of measurements, i.e., $n_1=n_2=n/2$. Finally, the optimal two-point design consists of the spiked concentrations $x_1=0$ and $x_2=x^* \leq r$ ($x^*$ printed in bold in column $r$ of Tabs. \ref{SDxstar} and \ref{Biasxstar}), with an allocation of measurements based on the respective result obtained by applying equation \ref{eqkap} which determines the optimal measurement proportion $\kappa_1$ for $x_1$. Hence, $n_1$ (given in column ${n_1}$ of Tabs. \ref{SDxstar} and \ref{Biasxstar}) is equal to the optimally rounded value of $\kappa_1n$ and $n_2=n-n_1$. These designs are compared with respect to six different settings. All settings share the same values for $\beta_0, \beta_1$ and $V_0$, but differ with respect to variance behavior - see the respective parameters given in Tab. \ref{Param}. In addition, four different values of the range are considered within each setting. Note that $\sigma$, and for $k>2$ also $r$, are chosen sufficiently small to avoid unrealistic values of $\sigma^2(x), x \leq r$, which would render the results devoid of any practical significance (see $\sigma$ and $r_{max}$ in Tab. \ref{Param}).\\

Moreover, as previously mentioned, when applying an optimal design, a guaranteed improvement in the precision of $\hat C_0$ resulting from an increase in the highest spiked concentration $x_{n_e}$, can only be expected if $k \leq 2$. This follows from the fact, that $x^*$ equals $r$ whenever $k \leq 2$, but not in general if $k>2$. In the latter case, there exists an upper bound $x^*_{max}$ for $x^*$, i.e., $x^*=r$ if $r \leq x^*_{max}$ and remains equal to $x^*_{max}$ if $r \geq x^*_{max}$. $x^*_{max}$ depends on the parameters describing the variance behavior and can be easily found by numerically solving the following equation (see Appendix):

\vspace*{-.2cm}

\begin{equation}
x^*_{max} = \operatorname*{arg\,max}_{x \in [0,r]} s_L(x) \quad \text{with} \quad  s_L(x)=\frac{f_2(x)}{f_1(x)+f_1(0)}.
\end{equation}

Therefore, since for the two-point design using a 50:50 allocation $x_2$ is always chosen equal to $r$, $x_2$ equals in this case $x^*$ irrespective of $r$ only if $k \leq 2$, whereas for $k>2$ this holds only if $r \leq x^*_{max}$. In contrast, the optimal design always sets   $x_2$ equal to $x^*$ and therefore $x_2=x^*_{max}$ for all $r \geq x^*_{max}$. In addition, the measurement allocation is optimal for the 50:50 allocation only if $k=2$ and $\sigma_0$ is rather small, whereas the optimal design of course always uses the optimal allocation.\\

The influence of the range for the multi-point and the 50:50 two-point design is straightforward to verify mathematically for the homoscedastic case, but considerably more complex in the heteroscedastic setting. Since the approximation formulas for bias and variance are linear combinations of $\sigma^2_{\hat \beta_0}, \sigma^2_{\hat \beta_1}$ and $cov(\hat \beta_0,\hat \beta_1)$ it can be tried to investigate the behavior of the different approaches by investigating how an increase of the range influences these quantities (see Supplement). For the homoscedastic case it can be shown, that an increase of the highest spiked concentration will not affect the variance of the estimator for the intercept, $\sigma^2_{\hat{\beta}_0}$, while that for the slope, $\sigma^2_{\hat{\beta}_1}$, decreases. In addition, also the covariance $cov(\hat{\beta}_0,\hat{\beta}_1)$ decreases with increasing range. In summary, it can therefore be concluded that an increase in $r$ leads to a decrease of $\sigma^2_{\hat C_0}$ in the homoscedastic case. Heteroscedasticity, however, significantly complicates matters, and analyzing the effect of an increasing range on the estimator variance based on the variables involved is no longer as straightforward as in the homoscedastic case. Nevertheless, a simple asymptotic analysis (see the

\begin{table}[H]
\centering
\caption{Comparison of unweighted (mult) and weighted multi-point designs (mult.w) to two-point designs with 50:50 allocation of measurements (50:50) and optimal two-point designs (opt) with  respect to standard deviation. sim...results gained by simulations, apx...approximate results gained by applying equation \ref{varf}. $x^*_{max}$ is, for $k>2$, given in column $r$ (printed in bold) and the optimal number of measurements allocated to $x_1$ can be found in column $n_1$.}
\label{SDxstar}
\begin{tabular}{rrrrrrrrrr}
  \hline
$k$ & $r$ & $n_1$ & $\sigma_{\hat C_0}^{mult.sim}$ & $\sigma_{\hat C_0}^{mult.w.sim}$ & $\sigma_{\hat C_0}^{mult.w.apx}$ & $\sigma_{\hat C_0}^{50:50 sim}$ & $\sigma_{\hat C_0}^{50:50 apx}$ & $\sigma_{\hat C_0}^{opt.sim}$ & $\sigma_{\hat C_0}^{opt.apx}$\\ 
  \hline
0 & 50.0 & 9 & 1.50 & 1.50 & 1.51 & 1.19 & 1.19 & 1.04 & 1.04 \\ 
  0 & 100.0 & 10 & 1.22 & 1.22 & 1.23 & 0.99 & 0.99 & 0.81 & 0.81 \\ 
  0 & 1000.0 & 11 & 0.99 & 0.99 & 0.99 & 0.83 & 0.83 & 0.62 & 0.62 \\ 
  0 & 10000.0 & 11 & 0.97 & 0.97 & 0.97 & 0.82 & 0.82 & 0.60 & 0.60 \\ 
  \hline
  1 & 50.0 & 8 & 1.52 & 1.40 & 1.39 & 1.04 & 1.02 & 0.99 & 0.98 \\ 
  1 & 100.0 & 9 & 1.36 & 1.15 & 1.15 & 0.84 & 0.84 & 0.77 & 0.77 \\ 
  1 & 1000.0 & 10 & 2.27 & 0.92 & 0.94 & 0.66 & 0.66 & 0.53 & 0.53 \\ 
  1 & 10000.0 & 11 & 6.52 & 0.91 & 0.92 & 0.65 & 0.65 & 0.48 & 0.48 \\ 
  \hline
  2 & 50.0 & 6 & 0.84 & 0.63 & 0.63 & 0.49 & 0.48 & 0.49 & 0.48 \\ 
  2 & 100.0 & 6 & 0.96 & 0.53 & 0.52 & 0.42 & 0.42 & 0.42 & 0.42 \\ 
  2 & 1000.0 & 6 & 4.66 & 0.42 & 0.41 & 0.35 & 0.35 & 0.35 & 0.35 \\ 
  2 & 10000.0 & 6 & 43.64 & 0.40 & 0.40 & 0.35 & 0.35 & 0.35 & 0.35 \\ 
  \hline
  3 & 30.0 & 5 & 0.79 & 0.58 & 0.57 & 0.48 & 0.48 & 0.47 & 0.47 \\ 
  3 & \textbf{60.0} & 4 & 0.96 & 0.49 & 0.48 & 0.47 & 0.46 & 0.44 & 0.44 \\ 
  3 & 667.0 & 4 & 11.17 & 0.64 & 0.64 & 0.97 & 0.95 & 0.44 & 0.44 \\ 
  3 & 1000.0 & 4 & 19.51 & 0.75 & 0.74 & 1.16 & 1.14 & 0.44 & 0.44 \\ 
  \hline
  4 & 14.0 & 4 & 0.68 & 0.57 & 0.57 & 0.47 & 0.47 & 0.46 & 0.46 \\ 
  4 & \textbf{28.3} & 4 & 0.73 & 0.45 & 0.45 & 0.44 & 0.44 & 0.40 & 0.41 \\ 
  4 & 67.0 & 4 & 1.29 & 0.43 & 0.43 & 0.57 & 0.57 & 0.40 & 0.41 \\ 
  4 & 100.0 & 4 & 1.94 & 0.47 & 0.47 & 0.72 & 0.72 & 0.40 & 0.41 \\ 
  \hline
  5 & 9.0 & 4 & 12.52 & 6.87 & 4.90 & 4.82 & 4.02 & 4.61 & 3.89 \\ 
  5 & \textbf{18.4} & 3 & 7.71 & 4.32 & 3.79 & 4.25 & 3.68 & 3.70 & 3.35 \\ 
  5 & 67.0 & 3 & 7050.79 & 5.14 & 4.16 & 90.51 & 7.34 & 3.73 & 3.35 \\ 
  5 & 100.0 & 3 & 26376444.62 & 8.66 & 5.14 & 1739.37 & 10.95 & 3.70 & 3.35 \\ 
   \hline
\end{tabular}
\end{table}


Supplement) still provides insight  to some extent into how the estimator for $C_0$ behaves. When no weighting is applied, increasing the range decreases $var(\hat{\beta}_1)$ for $k<2$ and should leave it unaffected for $k=2$. $var({\hat{\beta}}_0) $ increases with an increasing range for $k>0$ and $cov({\hat{\beta}}_0,{\hat{\beta}}_1)$ decreases for $k<1$ and should be  unaffected for $k=1$. Putting all this together gives no clear picture for $k < 2$ but suggests a clear increase of $\sigma^2_{\hat C_0}$ for $k \geq 2$. The behavior predicted by the asymptotic analysis is also confirmed by simulations, telling that an unrestricted extension of the range is advantageous in the unweighted case only for $k=0$. Interestingly, when proper weighting is applied, the asymptotic analysis yields the same results as in the unweighted case, but simulations show that the behavior of $var({\hat{\beta}}_0) $ is not as expected, instead it seems to be pretty stable which might be due to a stabilizing effect of $x_1=0$ (for $x_1 > 0$, simulations indeed show the expected behavior). However, when proper weighting is applied, simulations suggest that an unrestricted extension of the range could be advantageous up to $k = 2$.\\

\begin{table}[H]
\centering
\caption{Comparison of unweighted (mult) and weighted multi-point designs (mult.w) to two-point designs with 50:50 allocation of measurements (50:50) and optimal two-point designs (opt) with  respect to bias. sim...results gained by simulations, apx...approximate results gained by applying equation \ref{biasf}. $x^*_{max}$, calculated with respect to variance optimality of $\hat C_0$, is, for $k>2$, given in column $r$ (printed in bold) and the optimal number of measurements allocated to $x_1$ can be found in column $n_1$. }
\label{Biasxstar}
\footnotesize 
\begin{tabular}{ccc|c|cc|cc|cc}
 \hline
    \multicolumn{3}{c|}{Parameters} & 
    \multicolumn{7}{c}{$Bias(\hat C_0)$}\\
 \hline
$k$ & $r$ & $n_1$ & ${mult.sim}$ & ${mult.w.sim}$ & ${mult.w.apx}$ & ${50:50sim}$ & ${50:50apx}$ & ${opt.sim}$ & ${opt.apx}$\\
  \hline
0 & 50.0 & 9 & 0.053 & 0.05270 & 0.0432 & 0.0335 & 0.02400 & 0.0239 & 0.02311 \\ 
  0 & 100.0 & 10 & 0.023 & 0.02310 & 0.0168 & 0.0117 & 0.00933 & 0.0093 & 0.00880 \\ 
  0 & 1000.0 & 11 & 0.015 & 0.01491 & 0.0012 & 0.0087 & 0.00069 & 0.0065 & 0.00045 \\ 
  0 & 10000.0 & 11 & -0.004 & -0.00424 & 0.0001 & -0.0056 & 0.00007 & -0.0050 & 0.00004 \\ 
  \hline 
 1 & 50.0 & 8 & 0.058 & 0.04686 & 0.0432 & 0.0321 & 0.02333 & 0.0350 & 0.02625 \\ 
  1 & 100.0 & 9 & 0.044 & 0.04718 & 0.0189 & 0.0352 & 0.01000 & 0.0130 & 0.01333 \\ 
  1 & 1000.0 & 10 & -0.023 & -0.01084 & 0.0017 & -0.0089 & 0.00085 & -0.0080 & 0.00153 \\ 
  1 & 10000.0 & 11 & 0.120 & -0.00469 & 0.0002 & 0.0082 & 0.00008 & 0.0017 & 0.00027 \\ 
  \hline
  2 & 50.0 & 6 & 0.012 & 0.00016 & 0.0109 & 0.0056 & 0.00756 & 0.0056 & 0.00756 \\ 
  2 & 100.0 & 6 & 0.004 & -0.00009 & 0.0061 & -0.0009 & 0.00504 & -0.0009 & 0.00504 \\ 
  2 & 1000.0 & 6 & 0.028 & 0.00205 & 0.0024 & -0.0027 & 0.00318 & -0.0027 & 0.00318 \\ 
  2 & 10000.0 & 6 & 0.860 & 0.00208 & 0.0020 & 0.0070 & 0.00302 & 0.0070 & 0.00302 \\ 
  \hline
  3 & 30.0 & 5 & 0.010 & 0.01043 & 0.0119 & 0.0051 & 0.00967 & 0.0094 & 0.00874 \\ 
  3 & \textbf{60.0} & 4 & 0.026 & 0.00514 & 0.0080 & 0.0067 & 0.00907 & 0.0062 & 0.00720 \\ 
  3 & 667.0 & 4 & 0.720 & 0.02355 & 0.0178 & 0.0594 & 0.04377 & 0.0143 & 0.00720 \\ 
  3 & 1000.0 & 4 & 1.761 & 0.02163 & 0.0249 & 0.0684 & 0.06370 & 0.0049 & 0.00720 \\ 
  \hline
  4 & 14.0 & 4 & 0.018 & 0.01354 & 0.0134 & 0.0058 & 0.00985 & 0.0057 & 0.00863 \\ 
  4 & \textbf{28.3} & 4 & 0.023 & 0.00649 & 0.0082 & 0.0077 & 0.00873 & 0.0060 & 0.00698 \\ 
  4 & 67.0 & 4 & 0.034 & 0.00548 & 0.0078 & 0.0112 & 0.01550 & 0.0031 & 0.00698 \\ 
  4 & 100.0 & 4 & 0.074 & 0.01199 & 0.0097 & 0.0199 & 0.02500 & 0.0081 & 0.00698 \\ 
  \hline
  5 & 9.0 & 4 & 2.047 & 1.35766 & 1.0581 & 0.9052 & 0.74522 & 0.8216 & 0.66203 \\ 
  5 & \textbf{18.4} & 3 & 1.473 & 0.63799 & 0.6134 & 0.6784 & 0.63543 & 0.4826 & 0.48169 \\ 
  5 & 67.0 & 3 & -87.740 & 0.95001 & 0.8021 & 6.4338 & 2.67220 & 0.5252 & 0.48169 \\ 
  5 & 100.0 & 3 & 263760.960 & 1.69066 & 1.2632 & -10.6762 & 5.97658 & 0.5281 & 0.48169 \\ 
   \hline
\end{tabular}
\end{table}


\section{Discussion}

All results presented in this work regarding \hypertarget{start}{\hyperlink{vod}{variance optimal designs}} for standard addition are based on the assumption that the measurement variance $\sigma^2(x)$ is non-decreasing with increasing analyte concentration $x$. The assumed behavior is  described by $\sigma^2(x)=\sigma^2v(x)$,  with $\sigma > 0$ and variance function $v(x)=V_0+(\beta_1(C_0+x))^k$ as introduced by Eq. \ref{varfunc}. While the results presented in this work are valid for all $k \geq 0$, the particular important cases for practical applications are the case of constant error (homoscedasticity,  $k=0$) and the case of quadratic variance increase ($k=2$) or equivalently, the case of (approximately) constant relative standard error.\\ 

For what follows, in addition to the above assumption with respect to variance behavior, it is assumed that $C_0>0$. Then, the application of the results from optimality theory, i.e., \hyperref[Elfthm]{Elfving's Theorem} to the problem of design optimality for standard addition in \autoref{subsec3.1} yields the following unambiguous results (Note: designs are always compared based on the same number of observations, $n$):

\begin{itemize}
\item Irrespective of $k$, the variance optimal design is a two-point design, i.e., a design involving only two concentration levels. It is superior to all other possible alternative designs utilizing the same number of observations, regardless of how many spiked concentrations they use, and not only to other two-point designs.
\item The two support points of the optimal design, i.e., the concentrations $x_1$ and $x_2$ used for spiking, are uniquely defined, with $x_1=0$ being always the unique valid choice.
\item The optimal choice for $x_2$, denoted $x^*$ throughout this work, depends particularly on $k$ and $r$, with $r$ denoting the highest permissible concentration for spiking throughout this work.
\item If the variance grows at most quadratically with growing analyte concentration ($k \leq 2$), $x^*$ has always to be chosen equal to $r$.
\item If the variance grows more rapidly, i.e., $k>2$, the optimal $x^*$ might be smaller than $r$. Determining $x^*$ if $k>2$ can be done by proceeding as discussed in the Appendix. 
\item The optimal allocation $n_1$ and $n_2$ of a total of $n$ measurements to $x_1$ and $x_2$, respectively, with $n_1+n_2=n$, can be determined from Eq.~\ref{eqkap}.
\item In general, the optimal allocation does not correspond to an even distribution, i.e., $n_1=n_2$ cannot generally be assumed to hold.
\item Fortunately, in the particularly important case of quadratic variance growth, if the blank variance is small compared to the variance at $x_1$, the optimal allocation is an even distribution of measurements.
\end{itemize}

In contrast to the statements given above, the variance optimal design for a sample that contains no analyte, i.e., $C_0=0$, is a one-point design. This means that in this case it is, as also in line with intuition, optimal to measure only the unspiked sample. This result, however, only concerns the estimator variance and is of course not relevant to the question of whether the actual concentration can be inferred from such a result. For more information on this topic, see the issue of LOD and LOQ determination \cite{eurachem2025ffp}, but accurate determination of these quantities becomes particularly difficult in heteroscedastic settings when applying standard addition.\\

Since the intercept $\beta_0$ and the slope $\beta_1$ are held constant for all simulations and calculations presented in \autoref{ResBiasSD}, the results vary only with respect to $r$ and the variance behavior (the parameters appearing in the calculation of $\sigma^2(x)$). The results gained by simulations are based on $10^4$ iterations each whereas  the optimal proportion of observations $\kappa_1$ for $x_1$ and the approximate results for bias and variance of $\hat C_0$ are gained by applying Eqs. \ref{eqkap}, \ref{varf} and \ref{biasf}, respectively. The results regarding design optimality with respect to estimator precision gained by calculations given in Tab. \ref{opt2p} are exactly reproduced by the simulation results and approximations presented in Fig. \ref{SDsim}. These results, together with the results presented in Tab. \ref{SDxstar} are inline with the points stated above. In addition, by looking at Fig. \ref{SDsim} and into Tab. \ref{SDxstar}, it can be seen that the differences between the values for $\sigma^2_{\hat C_0}$ gained by simulation and the corresponding approximate values gained by applying equation \ref{varf}, are negligibly small up to $k=4$. This indicates that equation \ref{varf} provides reliable approximations which renders the application of time consuming simulations unnecessary, if the variance increase is not too rapid.\\

\paragraph{The results given in \autoref{ResBiasSD} in detail:} 

Tab. \ref{opt2p} shows that the optimal allocation of measurements, calculated by applying Eq. \ref{eqkap}, is only for $k=2$ in combination with small values of $\sigma_0$ equal to an even distribution but for all other settings deviates significantly from equality of $n_1$ and $n_2$. Fig. \ref{SDsim} shows how the precision of $\hat C_0$ depends on the allocation of measurements in the case of a two-point design for the same settings on which also the results in Tab. \ref{opt2p} are based. The results presented in this figure are gained by simulations and approximations, which, as already stated do not differ significantly. The optimal allocations that can be inferred from these figures are identical to the results given in Tab. \ref{opt2p}, i.e., the results based on applying formula \ref{eqkap} which is derived by applying optimality theory are identical to those derived by simulations, thus demonstrating the validity of the theoretical results. Furthermore, one can see that the precision increases with increasing $r$ since $x^*$ is always equal to $r$  for $k\leq 2$.\\

Tab. \ref{SDxstar} presents results regarding the comparison of several different designs for $n=12$. These designs are equidistantly spaced multi-point designs with $n_e=4$ with and without subsequent weighted regression as well as two-point designs with 50:50 allocation and optimal two-point designs. In the case of the two-point design with 50:50 allocation, $x_2$ is always set equal to $r$ and in the case of the optimal design, $x_2$ is equal to $r$ if $r \leq x^*_{max}$ and equal to $x^*_{max}$ if $r > x^*_{max}$. If $\bm{k=0}$, no weighting is necessary and thus both multi-point designs are equal. In this case, it can be observed that  $\sigma_{\hat C_0}^{mult}=\sigma_{\hat C_0}^{mult.w}>\sigma_{\hat C_0}^{50:50}>\sigma_{\hat C_0}^{opt}$ irrespective of $r$ and that increasing $r$ is beneficial in all cases. If $\bm{0 < k \leq 2}$, application of a multi-point design without subsequent weighted regression causes massive precision loss when arbitrarily increasing $r$, whereas application of weighted regression resolves this issue and yields an unconditional improvement of the result by increasing $r$. However, the 50:50 two-point design is even better than the weighted multi-point design, but of course outperformed by the optimal design, i.e., $\sigma_{\hat C_0}^{mult}>\sigma_{\hat C_0}^{mult.w}>\sigma_{\hat C_0}^{50:50}>\sigma_{\hat C_0}^{opt}$ irrespective of $r$. Increasing $r$ is beneficial for all but the unweighted multi-point approach, but does not lead to a uniform improvement. Instead, the gain in precision decreases progressively with increasing $r$, such that extending the range beyond a certain concentration results only in marginal further improvement. In summary, for $k \leq 2$, the following can be stated unambiguously: if the optimal design is not available (see also the \hyperlink{5050D}{remark} at the end of this section), apply a 50:50 two-point design and a large, but not necessarily excessively large, $x_2$ relative to the expected $C_0$. For example, for the specific case of $C_0=20, x_2=10^3$ already appears sufficiently large, since increasing $x_2$ to $10^4$ yields only a marginal further improvement. The case $\bm{k>2}$: in all considered cases, i.e., $k=3,4,5$, the values for $x^*_{max}$ remain below $100$, with respective values of approximately  $60, 28.3$ and $18.4$. All approaches (albeit with restrictions for the unweighted multi-point design) perform reasonable as long as $x_{n_e}$ is not too far from $x^*_{max}$. As expected, it holds as in case $k\leq2$ that, if $x_{n_e}=x^*$ for all designs, $\sigma_{\hat C_0}^{mult}>\sigma_{\hat C_0}^{mult.w}>\sigma_{\hat C_0}^{50:50}>\sigma_{\hat C_0}^{opt}$. Furthermore, multi-point designs without subsequent weighted regression show catastrophic results when $r$ is greatly enlarged. For the other three designs, when increasing $r$ beyond $x^*_{max}$, the following behavior can be observed: while the optimal design does not change any more since $x^*=x^*_{max}$ if $r>x^*_{max}$ and thus $\sigma_{\hat C_0}^{opt}$ stays constant, $\sigma_{\hat C_0}^{50:50}$ increases and gets at some point worse than $\sigma_{\hat C_0}^{mult.w}$. The latter seems to have its optimum beyond $x^*_{max}$ (which is of course not below $\sigma_{\hat C_0}^{opt}$ if the optimal design uses $x^*_{max}$) and to be more stable when increasing $r$ than the 50:50 two-point design. However, increasing $r$ beyond a certain point has for all, but the optimal design, a negative impact on estimator precision, thus, contrary to the case $k\leq2$, an arbitrary increase of the range is not beneficial if $k>2$. Therefore, in the latter case, if $x^*_{max}$ is not at least approximately known, one should aim to progressively improve the determination of the concentration by iteratively adapting the measurement design in subsequent steps.\\


Fig. \ref{Biassim} and Tab. \ref{Biasxstar} show results regarding the bias of $\hat C_0$ based on simulations and approximations which correspond to those shown above for the precision of the estimator. Within both, the figure and the table, the simulation and approximation results are paired. Due to the often rather small values of the bias, the fluctuation of this quantity seen in simulations is relatively high which is the reason for the discrepancies between the simulated values and the calculated approximate values based on equation \ref{biasf}. Nevertheless, it can be seen that simulations and approximations yield quantities of similar magnitude. However, the results of the approximations as well as the simulations indicate, that the variance optimal design is normally not identical to the optimal design with respect to the  bias of $\hat C_0$, but that the differences with respect to bias for these two designs are in general relatively moderate. Since, also with respect to the range applied, the bias shows a similar behavior as the variance, the statements made for the latter can also be applied more or less analogously to the former.\\


\hypertarget{5050D}{Finally}, it should once again be emphasized that, since determination of the optimal design needs knowledge of several, normally unknown quantities, it is often not possible to apply the optimal design in practice. Nevertheless, at least if $k \leq 2$, a two-point design with 50:50 allocation of measurements is expected to yield significantly better results, even in cases where it is not optimal, than common multi-point designs. However, if the variance function deviates substantially from this assumption, i.e., if $k$ is significantly larger than $2$, an iterative strategy may be appropriate in practice, particularly when concentrations within a narrow range are to be routinely determined. Initially, a sufficient number of measurements must be taken to obtain not only a sufficiently accurate estimate of the expected value of $C_0$ and the sensitivity, i.e., $\beta_1$, but in particular a reasonably accurate characterization of the variance function in order to satisfactorily determine approximate values of $x^*$ and $\kappa_1$. Thus, when the quantities necessary are known to a certain degree, it should be possible to get a design that's close to the optimal design which might be especially of value when standard addition is repeatedly applied.\\


\noindent \textbf{Supplementary Information}  Link to supplementary material.

\section{Declarations}

\noindent \textbf{Competing Interests}\\ 
The authors declare that they have no conflicts of interest for this work.\\

\noindent \textbf{Funding}\\ 
No funding was received for conducting this study.\\

\bibliography{c-opt-ref}

@article{valverde2023,
  title={Analytical quality by design using a D-optimal design and parallel factor analysis in an automatic solid phase extraction system coupled to liquid chromatography. Determination of nine PAHs in coffee samples},
  author={Valverde-Som, Luc{\'\i}a and Arce, MM and Sarabia, LA and Ortiz, MC},
  journal={Chemometrics and Intelligent Laboratory Systems},
  volume={243},
  pages={105008},
  year={2023},
  publisher={Elsevier}
}

@article{wang2025,
  title={Multi-response optimization and validation analysis in the detection of acetochlor and butachlor by HPLC based on D-optimal design methodology},
  author={Wang, Yu and Yang, Yang and Zhang, Si and Chi, Chao and Wang, Bohan and Su, Junfeng},
  journal={Journal of Chromatography A},
  volume={1754},
  pages={466025},
  year={2025},
  publisher={Elsevier}
}

@book{eurachem2025ffp,
  editor    = {H. Cantwell},
  title     = {The Fitness for Purpose of Analytical Methods: A Laboratory Guide to Method Validation and Related Topics},
  edition   = {3rd},
  year      = {2025},
  publisher = {Eurachem},
  url       = {https://www.eurachem.org},
  note      = {Available from www.eurachem.org}
}

@article{kiefer.1959,
  title={Optimum experimental designs},
  author={Kiefer, Jack},
  journal={Journal of the Royal Statistical Society: Series B (Methodological)},
  volume={21},
  number={2},
  pages={272--304},
  year={1959},
  publisher={Wiley Online Library}
}

@book{larson.2005,
  title={Calculus},
  author={Larson, Ron and Hostetler, Robert P and Edwards, Bruce},
  year={2005},
  publisher={Cengage Learning}
}

@book{rao.1973,
  title={Linear statistical inference and its applications},
  author={Rao, Calyampudi Radhakrishna},
  volume={2},
  year={1973},
  publisher={Wiley New York}
}

@article{tyrrell.1970,
  title={Convex analysis},
  author={Tyrrell Rockafellar, R},
  journal={Princeton mathematical series},
  volume={28},
  year={1970},
  publisher={Princeton university press Princeton, NJ, USA}
}

@book{nocedal.2006,
  title={Numerical optimization},
  author={Nocedal, Jorge and Wright, Stephen J},
  year={2006},
  publisher={Springer}
}

@article{karush.1939,
  title={Minima of functions of several variables with inequalities as side constraints},
  author={Karush, William},
  journal={M. Sc. Dissertation. Dept. of Mathematics, Univ. of Chicago},
  year={1939}
}

@article{kitsos.2010,
  title={An optimal calibration design for pH meters},
  author={Kitsos, Christos P and Kolovos, Konstantinos G},
  journal={Instrumentation Science and Technology},
  volume={38},
  number={6},
  pages={436--447},
  year={2010},
  publisher={Taylor \& Francis}
}

@article{dette.1993,
  title={Elfving's theorem for D-optimality},
  author={Dette, Holger},
  journal={The Annals of Statistics},
  pages={753--766},
  year={1993},
  publisher={JSTOR}
}

@article{elfving.1952,
  title={Optimum allocation in linear regression theory},
  author={Elfving, Gustav},
  journal={The Annals of Mathematical Statistics},
  pages={255--262},
  year={1952},
  publisher={JSTOR}
}

@article{ratzlaff.1979,
  title={Optimizing precision in standard addition measurement},
  author={Ratzlaff, Kenneth L},
  journal={Analytical Chemistry},
  volume={51},
  number={2},
  pages={232--235},
  year={1979},
  publisher={ACS Publications}
}

@book{montgomery.2017,
  title={Design and analysis of experiments},
  author={Montgomery, Douglas C},
  year={2017},
  publisher={John wiley \& sons}
}

@book{pukelsheim.2006,
  title={Optimal design of experiments},
  author={Pukelsheim, Friedrich},
  year={2006},
  publisher={SIAM}
}

@book{Montgomery.2013,
 author = {Montgomery, Douglas C. and Peck, Elizabeth A. and Vining, G. Geoffrey},
 year = {2013},
 title = {Introduction to Linear Regression Analysis},
 url = {https://ebookcentral.proquest.com/lib/kxp/detail.action?docID=1211887},
 address = {s.l.},
 edition = {5. Aufl.},
 publisher = {Wiley},
 isbn = {9780470542811},
 series = {Wiley Series in Probability and Statistics}
}

@article{Aitken.1935,
 author = {Aitken, A. C.},  
 year = {1935},
 title = {On Least Squares and Linear Combinations of Observations},
 journal = {Proc. R. Soc. Edinb.},
 doi = {10.1017/s0370164600014346}
}

@article{Gossler.2025,
 author = {G{\"o}ssler, Gerhard and Hofer, Vera and Goessler, Walter},
 year = {2025},
 title = {Evaluation of four different standard addition approaches with respect to trueness and precision},
 journal = {Anal. Bioanal. Chem.},
 doi = {10.1007/s00216-024-05725-8}
}

@article{Ellison.2008,
 author = {Ellison, Stephen L. R. and Thompson, Michael},
 year = {2008},
 title = {Standard additions: myth and reality},
 pages = {992--997},
 volume = {133},
 number = {8},
 journal = {The Analyst},
 doi = {10.1039/b717660k}
}

@article{Goncalves.2016,
 author = {Goncalves, Daniel A. and Jones, Bradley T. and Donati, George L.},
 year = {2016},
 title = {The reversed-axis method to estimate precision in standard additions analysis},
 pages = {155--158},
 volume = {124},
 issn = {0026265X},
 journal = {Microchem. J.},
 doi = {10.1016/j.microc.2015.08.006}
}

@article{Franke.1978,
 author = {Franke, J. P. and de Zeeuw, R. A. and Hakkert, R.},
 year = {1978},
 title = {Evaluation and optimization of the standard addition method for absorption spectrometry and anodic stripping voltammetry},
 pages = {1374--1380},
 volume = {50},
 number = {9},
 issn = {0003-2700},
 journal = {Anal. Chem.},
 doi = {10.1021/ac50031a045}
}


\section{Appendix}

\subsection{Determination and uniqueness of $x^*$}


It is assumed that $V_0 \ge 0, \quad \beta_1 > 0, \quad C_0 \ge 0, \quad k \ge 0 \ \text{and} \ x \in [0,r] = \mathcal{X}$. From this it follows, that, for $x \in \mathcal{X}$, the variance function $v(x)=V_0 + (\beta_1(C_0+x))^k$ is $\geq 0$ and monotonically increasing. To avoid a variance equal to zero, at least either $V_0$ or $C_0$ must be greater than zero. Furthermore, the regression function $f(x)$ is given by 

\vspace{-.2cm}

\[
f(x)=
\begin{pmatrix}
1/\sqrt{(v(x))}\\
x/\sqrt{(v(x))}
\end{pmatrix}
=(f_1(x),f_2(x))^T.
\]

To investigate the behavior of $f(\cdot)$ the following derivatives are needed:\\ 

$v'(x) = \beta_1^k \, k (C_0+x)^{k-1}, \ f_1'(x) = -\frac{1}{2} v(x)^{-3/2}v'(x)$ and $f_2'(x) = v(x)^{-1/2} - \frac{x}{2} v(x)^{-3/2} v'(x)$.\\ 

An extremum of the curve parametrized by $f(x)$ at $x \in (0,r)$ is only possible if

\[
f_2'(x) = 0 \ \Leftrightarrow \ 0=V_0 + \beta_1^k (C_0+x)^{k-1}
\left[C_0 + x\left(1-\frac{k}{2}\right)\right].
\]


The expression on the right-hand side can only be equal to zero if 
\[
C_0 + x\left(1-\frac{k}{2}\right) \leq 0. 
\]
For a closer examination, it is now necessary to distinguish between the following cases: $k < 2, k=2$ and $k > 2$. For $k<2$ it follows that $1-\frac{k}{2} > 0$ and therefore $f_2'(x) > 0$ for all $x \in (0,r)$, i.e., it is strictly increasing and has no critical points. For $k=2, 1 - k/2 = 0$, and the condition reduces to $V_0 + \beta_1^2 (C_0+x) C_0 = 0$. Since all terms are nonnegative, this holds if and only if $V_0 = 0 \quad \text{and} \quad C_0 = 0$. In this case, $f_2(x) = x/(\beta_1 x) = 1/\beta_1$, hence $f'_2(x) = 0 \quad \forall x > 0$. Therefore, for $k=2$, $f_2$ is either  constant or increasing. For $k > 2, 1 - k/2 < 0$ and $C_0 + x(1-k/2)$ is $< 0$ for all $x > C_0/(k/2-1)$. Therefore, if at least $V_0$ or $C_0$ are unequal to zero, $f'_2(0) > 0$ and

\[
\lim_{x \to \infty} \left(V_0 + \beta_1^k (C_0+x)^{k-1} \left[C_0 + x\left(1-\frac{k}{2}\right)\right]\right) = -\infty. 
\]

Hence, for $k > 2$, $f_2$ has exactly one critical point $x_m$ in $[0,\infty)$, which is a global maximum. Depending on $k, V_0$ and $C_0$, $x_m$ is either $\leq r$ or  $> r$.\\


The above considerations imply for $k \leq 2$ that it holds for the line $L$ passing through $-f(0)$ and $f(x^*)$ (see Fig. \ref{TangLine}) that $L \cap f([0,r]) = {f(x^*)}$, i.e., that the point of tangency of $L$ with $f([0,r])$ is unique.\\ 


For $k>2$ additional considerations are needed: The slope of the tangent to the curve is given by (see, e.g., \cite{larson.2005})

\vspace{-.5cm}

 \[
s_T(x) = \frac{f_2'(x)}{f_1'(x)},
\]

with

\vspace{-.5cm}

\[
{s'}_T\left(x\right)=1-\frac{2}{k}+\frac{V_0\left(k-1\right)\ }{k\left(\beta_1\left(C_0+x\right)\right)^k}>0 \ \text{for} \ k>2.
\]

\vspace{.2cm}

Hence, since $s'_T(x)>0 \ \text{for all } x \in [0,r]$, the tangent slope increases monotonically with $x$ and the curve continuously bends to the left and is thus strictly convex which implies that tangent lines touching the curve at different points cannot be parallel. Thus, also for $k>2$, $x^*$ is the unique point of tangency of $L$ with $f([0,r])$.\\  

Given the geometric properties of this problem, $x^*_{max}$ can be easily found by numerically solving

\vspace{-.2cm}

\begin{equation}
\label{xstardet}
x^*_{max} = \operatorname*{arg\,max}_{x \in [0,r]} s_L(x) 
\end{equation}

with

\vspace{-.5cm}

\begin{equation}
s_L(x)=\frac{f_2(x)}{f_1(x)+f_1(0)}
\end{equation}

\vspace{.2cm}

giving the slope of the line $L$ passing through $-f(0)$ and $f(x)$, $x \in [0,r]$ (sought-after is thus $x^* \in [0,r]$ such that $s_T(x^*)=s_L(x^*)$).\\

For example 4 ($k=5$), shown in Fig. \ref{Elvset}, $x^*$ is $\approx 18.4$ (in comparison $x_m \approx 13.33$).

\end{document}